%submitted version one
\documentclass[letterpaper,12pt]{llncs}

\usepackage{geometry}
 \geometry{
 letterpaper,
 top=1in,
 bottom=1in,
 left=1.25in,
 right=1.25in,
 }

\usepackage{graphicx}
\usepackage{subfigure}
\usepackage{epsfig}
\usepackage{amsmath}
\usepackage{amssymb}
\usepackage{algorithm}
\usepackage[noend]{algorithmic}

%\usepackage[sort,numbers]{natbib}
%\makeatletter
%\renewcommand\bibsection%
%{
%  \section*{\refname
%    \@mkboth{\MakeUppercase{\refname}}{\MakeUppercase{\refname}}}
%}
%\makeatother

\usepackage{versions}
\excludeversion{LONG}
\excludeversion{REMOVED}

\renewcommand{\log}{\lg}

%* Commands and Environments:  %%%%%%%%%%%%%%%%%%%%%%%%%%%%%%%%%%%%%%%%
%** Macros from JeffreyScottVitter
% Write multichar identifier names using \id in either mathmode or text;
% For ex, $\id{high}(x)$ is an expression using the \id{high} function.
% Use ``\ '' if a space is desired, as in math mode.
\def\idtt#1{\ensuremath{\mathtt{#1}}}

\def\accop{\idtt{access}}
\def\rankop{\idtt{rank}}
\def\selop{\idtt{select}}
\def\insop{\idtt{insert}}
\def\delop{\idtt{delete}}
\def\sumop{\idtt{sum}}
\def\searchop{\idtt{search}}
\def\updateop{\idtt{update}}
\def\predop{\idtt{pred}}
\def\succop{\idtt{succ}}

\def\etal{{\em et al.}}

\pagestyle{plain}
\begin{document}

\title{Compressed Dynamic Range Majority and Minority Data Structures
\thanks{Partially supported by Fondecyt grant 1-171058, Chile; NSERC, Canada; basal funds FB0001, Conicyt, Chile; and the Millenium Institute for Foundational Research on Data, Chile. A preliminary partial version of this article appeared in {\em Proc. DCC 2017} \cite{GHN17}.}}
\author{Travis Gagie\inst{1}\fnmsep\inst{2} \and Meng He\inst{3} \and Gonzalo Navarro\inst{1}\fnmsep\inst{4}\fnmsep\inst{5}}
\institute{
CeBiB --- Center for Biotechnology and Bioengineering, Chile
\and
School of Computer Science and Telecommunications, Diego Portales University, Chile, \email{travis.gagie@gmail.com}  
\and
Faculty of Computer Science, Dalhousie University, Canada, \email{mhe@cs.dal.ca} 
\and
Millenium Institute for Foundational Research on Data, Chile
\and
Department of Computer Science, University of Chile, Chile, \email{gnavarro@dcc.uchile.cl}  
}

\maketitle

\begin{abstract}
  \normalsize
In the range $\alpha$-majority query problem, we are given a sequence $S[1..n]$ and a fixed threshold $\alpha \in (0, 1)$, and are asked to preprocess $S$ such that, given a query range $[i..j]$, we can efficiently report the symbols that occur more than $\alpha (j-i+1)$ times in $S[i..j]$, which are called the range $\alpha$-majorities.
  In this article we first describe a dynamic data structure that represents $S$ in
compressed space --- $nH_k+ o(n\lg \sigma)$ bits for any $k =
o(\log_{\sigma} n)$, where $\sigma$ is the alphabet size and $H_k \le H_0 \le
\lg\sigma $ is the
$k$-th order empirical entropy of $S$ --- and answers queries in $O \left(
\frac{\log n}{\alpha \log \log n} \right)$ time while supporting insertions and
deletions in $S$ in $O \left( \frac{\lg n}{\alpha} \right)$ amortized time.
  We then show how to modify our data structure to receive some $\beta \ge
\alpha$ at query time and report the range $\beta$-majorities in $O \left( \frac{\log n}{\beta \log \log n} \right)$ time, without increasing the asymptotic space or update-time bounds.
  The best previous dynamic solution has the
same query and update times as ours, but it occupies $O(n)$ words and cannot
take advantage of being given a larger threshold $\beta$ at query time. 

Not even static data structures have previously achieved compression in terms of higher-order entropy.
The smallest ones take $n H_0 + o(n) (H_0 + 1)$ bits and answer queries in $O (f (n) / \alpha)$ time or take $(1 + \epsilon) n H_0 + o (n)$ bits and answer queries in optimal $O (1 / \alpha)$ time, where $f (n)$ is any function in $\omega (1)$ and $\epsilon$ is any constant greater than 0.
By giving up updates, we can improve our query
time to $O((1 / \alpha) \log \log_w \sigma)$ on a RAM with word size $w =
\Omega(\log n)$ bits, which is only slightly suboptimal, without increasing our space bound. 
Finally, we design the first dynamic data structure for range
$\alpha$-minority --- i.e., find a non-$\alpha$-majority that occurs in a 
range --- and obtain space and time bounds similar to those for 
$\alpha$-majorities. A static version of this structure is also the first
$\alpha$-minority data structure achieving compression in terms of $H_k$.
\end{abstract}

\section{Introduction}
\label{sec:intro}

An $\alpha$-majority in a sequence $S[1..n]$ is a character that occurs more than $\alpha n$ times in $S$, where the threshold $\alpha \in (0, 1)$.  Misra and Gries~\cite{mg1982} proposed a two-pass algorithm for finding all $\alpha$-majorities that runs in $O (1 / \alpha)$ space and can be made to run in linear time~\cite{dlm2002}.  In contrast, any algorithm that makes only a constant number of passes over $S$ needs nearly linear space even to estimate the frequency of the mode well~\cite{bjr07}, where the mode of $S$ is defined as its most frequent element.  Thus, finding $\alpha$-majorities is often considered a practical way to find frequent characters in large files and is important in data mining~\cite{fsmmu1998,dlm2002,ksp03}, for example.

For the {\em range $\alpha$-majority query} problem, we are asked to preprocess $S$ such that, given a query range $[i..j]$, we can efficiently report the $\alpha$-majorities of $S[i..j]$, i.e., the symbols that occur more than $\alpha (j-i+1)$ times in $S[i..j]$.
Not surprisingly, this problem seems easier than the range mode query problem~\cite{gglt10,cdlmw14}, in which the query asks for the most frequent element in the query range.
Karpinski and Nekrich~\cite{kn2008} first considered the range $\alpha$-majority query problem and proposed a solution that uses $O(n/\alpha)$ words to support queries in $O((\lg\lg n)^2/\alpha)$ time.
  Durocher~\etal~\cite{dhmn2013} presented the first solution that achieves optimal $O(1/\alpha)$ query time, and their structure also occupies $O(n/\alpha)$ words.
  Subsequent researchers have worked to make the space usage independent of $\alpha$~\cite{ghmn2011,cdsw2015,bgmnn2016} and even to achieve compression~\cite{ghmn2011,bgmnn2016}. 
  Among all these works, the most recent one is that of Belazzougui~\etal~\cite{BGN13,bgmnn2016}, who showed how to represent $S$ using $(1+\epsilon)nH_0 + o(n)$ bits for any constant $\epsilon > 0$ to answer range $\alpha$-majority queries in $O(1/\alpha)$ time, where $H_0$ is the $0$-th order empirical entropy of $S$.
  When more compression is desired, they also showed how to represent $S$ in $nH_0 + o(n)(H_0+1)$ bits to support range $\alpha$-majority in $O(f(n)/\alpha)$ time, for any $f(n) = \omega(1)$.
  Their solutions work for {\em variable} $\alpha$, that is, 
$\alpha$ is not known at construction time; the value of $\alpha$ is given
together with the range $[i,j]$ in each query.
  We refer readers to their most recent paper~\cite{bgmnn2016} for a more thorough survey.

  In the dynamic setting, we wish to maintain support for range
$\alpha$-majority queries under the following update operations on $S$: i)
$\insop(c, i)$, which inserts symbol $c$ between $A[i-1]$ and $A[i]$, shifting
the symbols in positions $i$ through $n$ to positions $i+1$ through $n+1$,
respectively; ii) $\delop(c,i)$, which deletes $A[i]$, shifting the symbols in
positions $i$ through $n$ to positions $i-1$ through $n-1$, respectively.
Elmasry~\etal~\cite{ehmn2016} considered this setting, and designed an
$O(n)$-word structure that can answer range $\alpha$-majority queries in
$O(\frac{\lg n}{\alpha \lg\lg n})$ time, supporting insertions and deletions in $O(\frac{\lg n}{\alpha})$ amortized time. 
  Before their work, Karpinski and Nekrich~\cite{kn2008} also considered the dynamic case, though they defined the dataset as a set of colored points in 1D. With a proper reduction \cite{ehmn2016}, the solutions by  Karpinski and Nekrich can also be used to encode dynamic sequences, although the results are inferior to those of Elmasry~\etal~\cite{ehmn2016}.
  More precisely, their data structures, when combined with the reduction \cite{ehmn2016}, can represent $S$ in $O(n/\alpha)$ words of space, answer queries in time $O(\frac{\lg^2 n}{\alpha})$, and support insertions and deletions in $O(\frac{\lg^2 n}{\alpha})$ amortized time.
  Alternatively, they can increase the space cost to $O(\frac{n\lg n}{\alpha})$, while decreasing the query and update times to  $O(\frac{\lg n}{\alpha})$  worst-case and amortized time, respectively.
  All the previous work for the dynamic case requires $\alpha$ to be a fixed value given at construction time. 

  A closely related problem is the {\em range $\alpha$-minority query}
problem, in which we preprocess a sequence $S$ such that, given a query range
$[i..j]$, we can efficiently report one {\em $\alpha$-minority} of $S[i..j]$,
i.e., a symbol that occurs at least once but not more than $\alpha (j-i+1)$
times in $S[i..j]$, if such a symbol exists, and otherwise return that there is no $\alpha$-minority in the range. 
  Chan~\etal~\cite{cdsw2015} studied this problem and designed an $O(n)$-word
data structure that answers range $\alpha$-minority queries in $O(1/\alpha)$ time.
  Belazzougui~\etal~\cite{BGN13,bgmnn2016} further designed succinct data structures for range $\alpha$-minority.
  They again presented two tradeoffs: they either represent $S$ using
$(1+\epsilon)nH_0 + o(n)$ bits for any constant $\epsilon > 0$ to answer range
$\alpha$-minority queries in $O(1/\alpha)$ time, or use $nH_0 + o(n)(H_0+1)$
bits and support range $\alpha$-minority queries in $O(f(n)/\alpha)$ time, for any $f(n) = \omega(1)$.
  The solutions of both Chan~\etal~\cite{cdsw2015} and
Belazzougui~\etal~\cite{BGN13,bgmnn2016} work for variable $\alpha$.
  No work has been done for dynamic range $\alpha$-minority queries.

\paragraph{Our results.} In this article we first consider the dynamic range
$\alpha$-majority problem for fixed $\alpha$ and improve the result of
Elmasry~\etal~\cite{ehmn2016} in two key performance aspects: we compress
their space requirements while reducing their time on some more general queries.
We describe a data structure
that uses even less space than Belazzougui~\etal's static representation:
$nH_k+ o(n\lg \sigma)$ bits for any $k = o(\log_{\sigma} n)$, where $\sigma$
is the alphabet size and $H_k \leq H_0 \leq \lg \sigma$ is the $k$-th order
empirical entropy of $S$. At the same time, while still supporting updates in
$O(\frac{\lg n}{\alpha})$ amortized time, we can reduce query times.
Specifically, although we still answer range $\alpha$-majority queries in
$O(\frac{\lg n}{\alpha \lg\lg n})$ time, like Elmasry~\etal, our data
structure can receive a threshold $\beta \geq \alpha$ at query time and report
the range $\beta$-majorities in $O(\frac{\lg n}{\beta \lg\lg n})$ time, rather
than $O(\frac{\lg n}{\alpha \lg\lg n})$ time. This type of queries is called
{\em range $\beta$-majority queries}. Gagie~\etal~\cite{ghmn2011} and
Chan~\etal~\cite{cdsw2015} investigated reporting $\beta$-majorities in the
static setting (i.e., variable $\alpha$) but no one has previously investigated doing so in the dynamic setting.  In summary, our time bounds are at least as good as those by Elmasry~\etal, our space bound is better to a surprising degree, and our data structure can take advantage of being given a larger threshold at query time in order to answer queries more quickly.

We also design the first solution to the dynamic range $\alpha$-minority
query problem, for fixed $\alpha$.
We can represent $S$ using $nH_k+ 2n + o(n\lg \sigma)$ bits for any $k =
o(\log_{\sigma} n)$ to answer range $\alpha$-minority queries in $O(\frac{\lg
n}{\alpha \lg\lg n})$ time, supporting symbol insertions and deletions in
$O(\frac{\lg n}{\alpha \log\log n})$ amortized time. 

As a byproduct of our main contributions, static versions of our dynamic data
structures turn out to be the first using as little as $nH_k+o(n\lg\sigma)$
bits of space ($+2n$ bits in the case of $\alpha$-minorities), for any 
$k = o(\lg_\sigma n)$. They support range 
$\alpha$-majority queries for variable $\alpha$, or $\alpha$-minority queries 
for fixed $\alpha$, in time $O((1/\alpha)\lg\lg_w\sigma)$. This time is not 
far from the $O(1/\alpha)$ achieved by Gagie~\etal~\cite{bgmnn2016} using 
$(1 +\epsilon)nH_0 + o(n)$ bits, for any constant $\epsilon > 0$, or the times 
in $(1/\alpha) \cdot \omega(1)$ they achieve within $nH_0 + o(n)(H_0 + 1)$ bits
of space. The time $O(1/\alpha)$ is optimal for $\alpha$-majority queries.

A preliminary partial version of this article appeared in {\em Proc. DCC 2017}
\cite{GHN17}. Apart from a more complete and detailed presentation, this version
includes the support for $\beta$-majority queries, the static data structure for $\alpha$-majority queries, and the dynamic data structure for $\alpha$-minority queries.

\section{Preliminaries}
\label{sec:pre}

In this section, we summarize some existing data structures that will be used in our solution. 
One such data structure is designed for the problem of maintaining a string $S$ under {\insop} and {\delop} operations to support the following operations: $\accop(i)$, which returns $S[i]$; $\rankop(c, i)$, which returns the number of occurrences of character $c$ in $S[1..i]$; and $\selop(c, i)$, which returns the position of the $i$-th occurrence of $c$ in $S$. The following lemma summarizes the currently best compressed solution to this problem, which also supports the extraction of an arbitrary substring in optimal time: 

\begin{lemma}[\cite{mn2015}]\label{lem:string}
A string of length $n$ over an alphabet of size $\sigma$ can be represented using $nH_k+ o(n\lg \sigma)$ bits for any $k = o(\log_{\sigma} n)$ to support $\accop$, $\rankop$, $\selop$, $\insop$ and $\delop$ in $O(\lg n / \lg\lg n)$ time. It also supports the extraction of a substring of length $l$ in $O(\lg n / \lg\lg n+ l/\log_{\sigma} n)$ time. 
\end{lemma}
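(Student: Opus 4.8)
The plan is to build a balanced multiary tree over the $n$ positions of $S$ and to store the string itself, in compressed form, in the leaves. I would use a weight-balanced B-tree whose internal nodes have $\Theta(\log^{\epsilon} n)$ children for a small constant $\epsilon>0$; then the tree has height $O(\log n/\log\log n)$, which is exactly the budget we have for a single operation. Each leaf stores a block of $\Theta(\log_{\sigma} n\cdot\log^{\epsilon} n)$ consecutive symbols together with the $k$ symbols that immediately precede that block in $S$ (its ``left context''). The block is kept in statistically encoded form: its symbols are emitted left to right, each coded according to the distribution of the symbols that follow its length-$k$ context in a global model that is recomputed from scratch every $\Theta(n)$ updates. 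By the standard fixed-block partition argument (as in the analyses of Ferragina--Venturini and of Gonz\'alez--Navarro), since each block spans $\Theta(\log_{\sigma} n\cdot\log^{\epsilon} n)$ symbols, the total length of all encoded blocks is $nH_k+o(n\log\sigma)$ bits; the stored contexts and tree pointers add only $o(n\log\sigma)$ further bits provided $k=o(\log_{\sigma} n)$.

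For the query operations I would proceed level by level, spending $O(1)$ time per level so that the total is $O(\log n/\log\log n)$. At each internal node we must, given an index (for $\accop$) or a character and a rank (for $\rankop$, $\selop$), identify the correct child and update the running offsets; since a node has only $\Theta(\log^{\epsilon} n)$ children and each child contributes $O(\log n)$ bits of summary information (subtree size and, more delicately, data enabling per-character counts), these computations reduce to arithmetic on words of $O(\log^{1+\epsilon} n)$ bits, which is done in $O(1)$ time using word-level parallelism and a universal precomputed table of $o(n)$ bits. Inside a leaf, $\accop$, $\rankop$ and $\selop$ on the encoded block are answered likewise in $O(1)$ time by table lookup after decoding $O(\log n)$-bit chunks of the encoding. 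The genuinely hard point is making $\rankop$ and $\selop$ run in $O(\log n/\log\log n)$ time \emph{independently of $\sigma$}: we cannot afford, at each node, a cumulative count for every one of the $\sigma$ characters. I would resolve this exactly as in Navarro--Nekrich's optimal uncompressed dynamic-sequence structure --- maintaining, for each character, an auxiliary searchable-partial-sums / marked-list structure threaded through the same tree --- and then layer the leaf compression on top, verifying that these auxiliary structures also fit in $o(n\log\sigma)$ bits. I expect getting these two mechanisms to coexist without either the space or the time degrading to be where most of the technical work lies.

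Updates are handled by the usual B-tree machinery: $\insop$ and $\delop$ first descend to the affected leaf in $O(\log n/\log\log n)$ time and re-encode its $O(\log^{1+\epsilon} n)$-bit block in $O(1)$ time via tables, then split or merge nodes bottom-up along the search path; weight-balancedness gives $O(\log n/\log\log n)$ amortized restructuring per update. Whenever a leaf would grow or shrink past its size bounds, or $\Theta(n)$ updates have accumulated (so that $n$, the block size, the tables and the entropy model are no longer calibrated), we rebuild the whole structure at cost $O(n\log n/\log\log n)$ worst case but $O(\log n/\log\log n)$ amortized; this is what keeps the $H_k$ bound honest as $S$ changes. Finally, substring extraction of length $l$ locates the starting leaf in $O(\log n/\log\log n)$ time and then sequentially decodes blocks, where each $O(\log n)$-bit chunk of encoded text yields $\Theta(\log_{\sigma} n)$ symbols in $O(1)$ time, for the claimed $O(\log n/\log\log n+l/\log_{\sigma} n)$ total. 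As noted, the main obstacle is the simultaneous achievement of $\sigma$-independent $O(\log n/\log\log n)$ time for $\rankop$/$\selop$ and $nH_k+o(n\log\sigma)$-bit space.
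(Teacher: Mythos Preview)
The paper does not prove this lemma at all: it is stated in the Preliminaries section as a known result, cited from~\cite{mn2015}, and used as a black box throughout. There is therefore no ``paper's own proof'' to compare your proposal against.

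That said, your sketch is a reasonable outline of how results of this type are actually obtained in the cited literature (Munro--Nekrich and, before them, Navarro--Nekrich and Gonz\'alez--Navarro): a weight-balanced B-tree of fan-out $\Theta(\log^{\epsilon} n)$ and height $O(\log n/\log\log n)$, leaves holding $\Theta(\mathrm{polylog}\,n)$ symbols encoded to high-order entropy, word-level parallelism plus small universal tables for $O(1)$-time navigation per node, and periodic global rebuilds to keep the entropy model and parameters calibrated. You also correctly identify the crux --- achieving $\rankop/\selop$ in $O(\log n/\log\log n)$ time independent of~$\sigma$ while staying within $o(n\log\sigma)$ extra bits --- and you are right that this is where the real work lies; the solution in~\cite{mn2015} is considerably more intricate than ``thread a partial-sums list per character through the tree'' (that alone would cost $\Theta(\sigma)$ pointers per node). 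If you intend to write an actual proof rather than defer to the citation, that is the step you would need to flesh out substantially.
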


Raman~\etal~\cite{rrr2001} considered the problem of representing a dynamic integer sequence $Q$ to support the following operations:  $\sumop(Q, i)$, which computes $\sum_{j=1}^i Q[j]$; $\searchop(Q, x)$, which returns the smallest $i$ with $\sumop(Q, i) \ge x$; and $\updateop(Q, i, \delta)$, which sets $Q[i]$ to $Q[i] + \delta$. 
One building component of their solution is a data structure for small sequences, which will also be used in our data structures: 

\begin{lemma}[\cite{rrr2001}]
\label{lem:sum}
A sequence, $Q$, of $O(\lg^{\epsilon} n)$ nonnegative integers of $O(\lg n)$ bits each, where $0 \le \epsilon < 1$, can be represented using $O(\lg^{1+\epsilon}n)$ bits to support $\sumop$, $\searchop$, and $\updateop(Q, i, \delta)$ where $|\delta| \le \lg n$, in $O(1)$ time. This data structure can be constructed in $O(\lg^{\epsilon} n)$ time, and requires a precomputed universal table occupying $O(n^{\epsilon'})$ bits for any fixed $\epsilon' > 0$. 
%The data structure requires a precomputed universal table of size $O(n^{\epsilon'})$ bits for any fixed $\epsilon' > 0$. 
%The structure can be constructed in $O(\lg^{\epsilon} n)$ time except the precomputed table. 
\end{lemma}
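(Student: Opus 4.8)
\noindent\emph{Proof idea.}
The plan is to keep $Q[1..m]$, where $m=O(\lg^{\epsilon}n)$, stored explicitly and packed into $O(\lg^{1+\epsilon}n)$ bits --- i.e., $O(\lg^{\epsilon}n)$ machine words --- and to superimpose on it a balanced tree of \emph{constant} height recording partial sums of blocks of entries. Giving the tree fan-out $f=\Theta(\lg^{\gamma}n)$ for a sufficiently small constant $\gamma>0$, a tree with $m$ leaves has height $O((\lg m)/(\lg f))=O(\epsilon/\gamma)=O(1)$. At each internal node we store the prefix sums of the total weights of its (at most $f$) subtrees; since $\sum_{j}Q[j]=\mathrm{poly}(n)$ these are $O(\lg n)$-bit values, so a node occupies $O(f\lg n)$ bits, and as there are $O(m/f)$ nodes the aggregates also take $O(\lg^{1+\epsilon}n)$ bits in total.

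First I would answer the two queries by descending this tree from the root. For $\sumop(Q,i)$, at each node one reads --- in $O(1)$ time, from a fixed word offset --- the stored prefix sum up to the child that precedes the one containing the $i$-th leaf, adds it to an accumulator, and recurses into that child; after $O(1)$ levels the accumulator equals $\sum_{j\le i}Q[j]$. For $\searchop(Q,x)$ one descends similarly, but at each node must locate, among the $\le f$ stored (sorted) prefix sums, the predecessor of the residual target --- a predecessor search over $O(\lg^{\gamma}n)$ keys of $O(\lg n)$ bits. Since a node's data spans several words and hence cannot be fed wholesale to a table, I would resolve this per-node step in $O(1)$ time by word-level parallelism in the style of fusion trees: pack the node's keys into a single $o(\lg n)$-bit sketch word, perform the parallel comparison with one multiplication, and delegate the low-level bit primitives (isolating the distinguishing bit position, extracting a field, and so on) to the universal lookup table. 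Choosing $\gamma$ small enough that every table argument has at most $\epsilon'\lg n$ bits makes the table occupy $O(n^{\epsilon'})$ bits after rescaling $\epsilon'$, and yields $O(1)$-time $\sumop$ and $\searchop$.

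For $\updateop(Q,i,\delta)$ with $|\delta|\le\lg n$ I would overwrite $Q[i]$ with a masked word write and then walk the single root-to-leaf path to leaf $i$; on that path exactly one subtree per node changes its weight by $\delta$, so the aggregates of these $O(1)$ nodes must be revised. The delicate point is that such a revision conceptually shifts every prefix sum past one position inside a node, yet must still cost $O(1)$ per node; this is again handled with the same word-parallel and table-lookup toolkit, since both $\delta$ and the per-node data are narrow enough to be manipulated with a constant number of table accesses and arithmetic operations. Construction is then immediate: write the packed $Q$ and fill in all node aggregates bottom-up in $O(m)=O(\lg^{\epsilon}n)$ time, reusing the same table.

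The crux --- and the part I expect to be the real work --- is precisely making every per-node operation run in worst-case $O(1)$ on the word RAM: the per-node aggregates are wider than one word, so they cannot simply be tabulated, and their encoding must simultaneously support the constant-time reads that the queries need and the constant-time revisions that updates need. This is exactly the machinery developed by Raman~\etal~\cite{rrr2001}, whose construction one would reuse, referring to their paper for the detailed word-RAM implementation and for the verification of the $O(n^{\epsilon'})$-bit table bound.
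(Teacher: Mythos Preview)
The paper does not prove this lemma at all: it is quoted verbatim as a known result of Raman~\etal~\cite{rrr2001} and used as a black box, so there is no ``paper's own proof'' to compare against. Your sketch is a reasonable high-level reconstruction of how such a result is obtained (constant-height high-fan-out tree over the packed array, with word-parallel per-node operations backed by a sublinear universal table), and you correctly defer the delicate word-RAM details to the cited source; for the purposes of this paper that is exactly the right stance.
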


%We will use this lemma to encode information stored as small sequences of integers in our data structures.

\section{Compressed Dynamic Range Majority Data Structures}
\label{sec:compressed}

In this section we design compressed dynamic data structures for range $\alpha$-majority queries. 
We define three different types of queries as follows. 
Given an $\alpha$-majority query with range $[i..j]$, we compute the size, $r$, of the query range as $j - i + 1$. 
If $r \ge L$, where $L = \lceil \frac{1}{\alpha} (\lceil \frac{\lg n}{\lg\lg n} \rceil)^2 \rceil$, then we say that this query is a {\em large-sized query}. 
The query is called a {\em medium-sized query} if $L' < r < L$, where $L' = \lceil \frac{1}{\alpha} \lceil \frac{\lg n}{\lg\lg n} \rceil \rceil$. 
If $r \le L'$, then it is a {\em small-sized query}.

We represent the input sequence $S$ using Lemma~\ref{lem:string}. This supports small-sized queries immediately: By Lemma~\ref{lem:string}, we can compute the content of the subsequence $S[i..j]$, where $[i..j]$ is the query range, in $O(\frac{\lg n}{\lg\lg n} + \frac{j-i+1}{\log_{\sigma} n}) = O(\frac{\lg n}{\alpha \lg\lg n})$ time. 
We can then compute the $\alpha$-majorities in $S[i..j]$ in $O(j-i+1) = O(\frac{\lg n}{\alpha \lg\lg n})$ time using the algorithm of Misra and Gries~\cite{mg1982}. 
Thus it suffices to construct additional data structures only for large- and medium-sized queries.

\subsection{Supporting Large-Sized Range $\alpha$-Majority Queries}
\label{sec:large}

To support large-sized queries, we construct a weight-balanced B-tree~\cite{av2003} $T$ with branching parameter $8$ and leaf parameter $L$. 
We augment $T$ by adding, for each node, a pointer to the node immediately to its left at the same level, and another pointer to the node immediately to its right. 
These pointers can be maintained easily under updates, and will not affect the space cost of $T$ asymptotically. 
Each leaf of $T$ represents a contiguous subsequence, or {\em block}, of $S$, and the entire sequence $S$ can be obtained by concatenating all the blocks represented by the leaves of $T$ from left to right. 
Each internal node of $T$ then represents a block that is the concatenation of all the blocks represented by its leaf descendants.
We number the levels of $T$ by $0, 1, 2, \ldots$ from the leaf level to the root level. Thus level $a$ is higher than level $b$ if $a > b$. 
Let $v$ be a node at the $l$-th level of $T$, and let $B(v)$ denote the block it represents. Then, by the properties of weight-balanced B-trees, if $v$ is a leaf, the length of its block, denoted by $|B(v)|$, is at least $L$ and at most $2L - 1$.
If $v$ is an internal node, then $\frac{1}{2}\cdot 8^l \cdot L < |B(v)| < 2\cdot 8^l \cdot L$. 
We also have that each internal node has at least $2$ and at most $32$ children. 

We do not store the actual content of a block in the corresponding node of $T$. Instead, for each $v$, we store the size of the block that it represents, and in addition, compute and store information in a structure $C(v)$ called {\em candidate list} about symbols that can possibly be the $\alpha$-majorities of subsequences that meet certain conditions. More precisely, let $l$ be the level of $v$, $u$ be the parent of $v$, and $SB(v)$ be the concatenation of the blocks represented by the node immediately to the left of $u$ at level $l+1$, the node $u$, and the node immediately to the right of $u$ at level $l+1$. Then $C(v)$ contains each symbol that appears more than $\alpha b_l$ times in $SB(v)$, where $b_l = \frac{1}{2}\cdot 8^l \cdot L$ is the minimum size of a block at level $l$.
Since the maximum length of each block at level $l+1$ is $4b_{l+1} = 32 b_l$, we have $|SB(v)| \le 96 b_l$, and thus $|C(v)| = O(1/\alpha)$. 
To show the idea behind the candidate lists, we say that two subsequences {\em touch} each other if their corresponding sets of indices in $S$ are not disjoint. 
We then observe that, since the size of any block at level $l+1$ is greater than $8b_l$, any subsequence $S[i..j]$ touching $B(v)$ is completely contained in $SB(v)$ if $r = j-i+1$ is within $(b_l, 8b_l)$.
Since each $\alpha$-majority in $S[i..j]$ appears at least $\alpha r > \alpha b_l$ times, it is also contained in $C(v)$.
Therefore, to find the $\alpha$-majority in $S[i..j]$, it suffices to verify whether each element in $C(v)$ is indeed an answer; more details are to be given in our query algorithm later. 

Even though it only requires $O(|SB(v)|)$ time to construct $C(v)$~\cite{mg1982}, it would be costly to reconstruct it every time an update operation is performed on $SB(v)$.
To make the cost of maintaining $C(v)$ acceptable, we only rebuild it periodically by adopting a strategy by Karpinski and Nekrich~\cite{kn2008}.
More precisely, when we construct $C(v)$, we store symbols that occur more than $\alpha b_l/2$ times in $SB(v)$. We also keep a counter $U(v)$ that we increment whenever we perform {\insop} or {\delop} in $SB(v)$. 
Only when $U(v)$ reaches $\alpha b_l / 2$ do we reconstruct $C_B$, and then we reset $U(v)$ to $0$.
Since at most $\alpha b_l/2$ updates can be performed to $|SB(v)|$ between two consecutive reconstructions, any symbol that becomes an $\alpha$-majority in $|SB(v)|$ any time during these updates must have at least $\alpha b_l/2$ occurrences in $SB(v)$ before these updates are performed.
Thus we can guarantee that any symbol that appears more than $\alpha b_l$ times in $SB(v)$ is always contained in $C(v)$ during updates.
The size of $C(v)$ is still $O(b_l / \alpha)$, and, as will be shown later, it only requires $O((\lg n)/ \alpha)$ amortized time per update to $S$ to maintain all the candidate lists.

We also construct data structures to speed up a top-down traversal in $T$. These data structures are defined for the {\em marked} levels of $T$, where the $k$-th marked level is level $k\lceil (1/6) \lg\lg n\rceil$ of $T$ for $k = 0, 1, \ldots$.
Given a node $v$ at the $k$-th marked level, the number of its descendants at the $(k-1)$-st marked level is at most $32^{\lceil (1/6) \lg\lg n\rceil - 1} \le 32^{(1/6) \lg\lg n} = \lg^{5/6} n$.
Thus, the sizes of the blocks represented by these descendants, when listed from left to right, form an integer sequence, $Q(v)$, of at most $\lg^{5/6} n$ entries.
We represent $Q(v)$ using Lemma~\ref{lem:sum}, and store a sequence of pointers $P(v)$, in which $P(v)[i]$ points to the $i$-th leftmost descendant at the $(k-1)$-st marked level. 

We next prove the following key lemma regarding an arbitrary subsequence $S[i..j]$ of length greater than $L$, which will be used in our query algorithm: 

\begin{lemma}\label{lem:largecandidate}
    If $r = j-i + 1 > L$, then each $\alpha$-majority element in $S[i..j]$ is contained in $C(v)$ for any node $v$ at level $l = \lceil \frac{1}{3}\lg \frac{2r}{L} -1 \rceil$ whose block touches $S[i..j]$. 
\end{lemma}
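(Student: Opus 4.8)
The plan is to verify two things for a node $v$ at level $l = \lceil \frac{1}{3}\lg\frac{2r}{L} - 1\rceil$ whose block touches $S[i..j]$: first, that $S[i..j]$ is entirely contained in $SB(v)$, and second, that any $\alpha$-majority of $S[i..j]$ occurs more than $\alpha b_l$ times in $SB(v)$, which by the definition of the (periodically rebuilt) candidate list places it in $C(v)$. Both reduce to comparing $r$ against the range $(b_l, 8b_l)$, where $b_l = \frac12\cdot 8^l\cdot L$, so the core of the argument is the chain of inequalities relating $r$, $L$, and $8^l$.

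First I would unwind the floor in the definition of $l$. Writing $l = \lceil \frac{1}{3}\lg\frac{2r}{L} - 1\rceil$, we get $\frac{1}{3}\lg\frac{2r}{L} - 1 \le l < \frac{1}{3}\lg\frac{2r}{L}$, hence $8^{l} < 8^{\frac13\lg\frac{2r}{L}} = \frac{2r}{L}$ and $8^{l} \ge 8^{\frac13\lg\frac{2r}{L} - 1} = \frac{1}{8}\cdot\frac{2r}{L} = \frac{r}{4L}$. Multiplying through by $\frac{L}{2}$ gives $b_l = \frac12\cdot 8^l\cdot L \in [\,\frac{r}{8},\, r\,)$, i.e. $\frac{r}{8}\le b_l < r$, equivalently $b_l < r \le 8 b_l$. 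So $r$ lies in the half-open interval $(b_l, 8b_l]$; I would note that $r > b_l$ is strict, while the upper inequality is non-strict, and check that this is exactly the regime the earlier discussion flagged as "touching implies contained" — there the claim was stated for $r\in(b_l,8b_l)$, so I need to confirm the boundary case $r = 8b_l$ still works, or simply observe that the size of any block at level $l+1$ exceeds $8b_l \ge r$ with the strictness coming from $|B(\cdot)| > \frac12\cdot 8^{l+1}\cdot L = 4b_l$... — wait, that bound gives $> 4b_l$, not $> 8b_l$. Let me instead use that a level-$(l+1)$ block has length $> \frac12\cdot 8^{l+1}L = 4b_l$; that is not enough. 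The correct bound to invoke is that the neighbors-plus-parent span $SB(v)$ and that any interval of length $r \le 8b_l$ touching $B(v)$ stays inside $SB(v)$; I would re-derive this from the fact that $B(v)$ itself has length $> b_l$ — hmm, also not immediately $\ge r$.

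So the \textbf{main obstacle} is pinning down precisely why "touches $B(v)$ and has length $\le 8b_l$" forces containment in $SB(v)$. The clean way: $u$, the parent of $v$, is at level $l+1$, so $|B(u)| > \frac12\cdot 8^{l+1}\cdot L = 4b_l$; the left and right neighbors of $u$ at level $l+1$ likewise each have length $> 4b_l$. An interval $S[i..j]$ touching $B(v)\subseteq B(u)$ extends at most $r-1 < 8b_l$ positions to either side of some index in $B(u)$; since $B(u)$ together with its two level-$(l+1)$ neighbors gives at least $4b_l$ positions of slack on each side of $B(u)$... no — I need $8b_l$ of slack, not $4b_l$. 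This is the gap I expect to have to close carefully, and it is why the lemma's level choice subtracts $1$: I suspect the right reading is $b_l < r \le 8b_{l-1} = b_l$ — contradiction — so more likely the intended inequalities are $b_{l} < r$ and $r \le 8 b_{l}$ with the "$+1$ level" of neighbors on each side of $u$ contributing $> 4b_l$ each for a total exceeding $8b_l \ge r$ on each side, which does close it. I would write the containment step as: the blocks to the left of $B(u)$ within $SB(v)$ total more than $4b_l \ge r/2 > r - 1$ is false in general — so the argument must instead be that $SB(v)$ extends more than $4b_l$ to the left of $B(u)$'s left endpoint and more than $4b_l$ to the right of its right endpoint, and any $S[i..j]$ of length $\le 8b_l$ touching $B(v)$ — hence touching $B(u)$ — reaches at most $8b_l - 1 < 8b_l$ beyond one endpoint of $B(u)$, but could need up to almost $8b_l$ of slack on a single side; $4b_l$ does not suffice. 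Therefore I would look more carefully: the neighbor node at level $l+1$ is itself the neighbor, but its own sibling structure... Concretely, the resolution is that the two level-$(l+1)$ neighbors of $u$ together with $u$ span $> 3\cdot 4b_l = 12 b_l > 8b_l \ge r$ contiguous positions, and any length-$r$ interval touching the middle third ($B(u)$, of length $> 4b_l$) of this span lies within the whole span provided $r \le$ (length of one side) $+ |B(u)|$; since one side alone is $> 4b_l$ and $|B(u)| > 4b_l$, we get slack $> 8b_l \ge r$ on each side, which is exactly enough. I would present this slack computation as the heart of the proof.

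With containment established, the rest is immediate: any $\alpha$-majority $c$ of $S[i..j]$ occurs more than $\alpha r$ times in $S[i..j]$, hence more than $\alpha r > \alpha b_l$ times in the larger string $SB(v)\supseteq S[i..j]$; by the maintenance invariant of $C(v)$ (which always retains every symbol appearing more than $\alpha b_l$ times in $SB(v)$, even between rebuilds), $c\in C(v)$. I would close by remarking that the floor $\lceil\cdot\rceil$ and the "$-1$" are calibrated precisely so that $L = \lceil\frac{1}{\alpha}(\lceil\lg n/\lg\lg n\rceil)^2\rceil \le r$ guarantees $l \ge 0$, so such a level genuinely exists in $T$.
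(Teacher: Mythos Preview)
Your overall plan---show $b_l < r$ and show $S[i..j]\subseteq SB(v)$, then conclude via the maintenance invariant of $C(v)$---is exactly the paper's approach. The difficulty you encountered in the containment step is entirely due to an arithmetic slip: you write $\tfrac{1}{2}\cdot 8^{l+1}\cdot L = 4b_l$, but since $b_l = \tfrac{1}{2}\cdot 8^l\cdot L$ we have $\tfrac{1}{2}\cdot 8^{l+1}\cdot L = 8\cdot\tfrac{1}{2}\cdot 8^l\cdot L = 8b_l$, not $4b_l$. With the correct value, each level-$(l+1)$ neighbor of $u$ has length strictly greater than $8b_l \ge r$, and the containment is immediate: any length-$r$ interval touching $B(u)$ can overhang $B(u)$ by at most $r-1 < |B(u_1)|$ positions on the left (and symmetrically on the right), so it stays inside $SB(v)$. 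This is precisely the paper's argument, phrased there as $b_{l+1} \ge r$.

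Your attempted workaround---``$r \le (\text{length of one side}) + |B(u)|$ suffices''---does not actually establish containment and should be dropped. If $S[i..j]$ touches $B(u)$ only at $B(u)$'s leftmost position, it can extend $r-1$ positions to the left; $|B(u)|$ contributes nothing to the slack on that side, so you need $|B(u_1)| \ge r-1$, not $|B(u_1)| + |B(u)| \ge r$. Fortunately, once the factor-of-two error is corrected, $|B(u_1)| > 8b_l \ge r$ gives this directly and the whole ``main obstacle'' evaporates.
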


\begin{proof}
  Let $u$ be $v$'s parent. Then $S[i..j]$ also touches $u$, and $u$ is at level $l+1$.  Let $u_1$ and $u_2$ be the nodes immediately to the left and right of $u$ at level $l+1$, respectively.

  Let $b_l$ and $b_{l+1}$ denote the minimum block size represented by nodes at level $l$ and $l+1$ of $T$, respectively. 
  Then, by the properties of weight-balanced B-trees, if $l > 0$, $b_l = \frac{1}{2} \cdot 8^l \cdot L = \frac{1}{2} \cdot 8^{\lceil \frac{1}{3}\lg \frac{2r}{L} -1 \rceil} \cdot L < \frac{1}{2} \cdot 8^{\frac{1}{3}\lg \frac{2r}{L}} \cdot L = r$. When $l = 0$, $b_l = L < r$. Thus, we always have $b_l < r$. Therefore, any $\alpha$-majority of $S[i..j]$ occurs more than $\alpha r > \alpha b_l$ times in $S[i..j]$.

  On the other hand, $b_{l+1} = \frac{1}{2} \cdot 8^{\lceil \frac{1}{3}\lg \frac{2r}{L} \rceil} \cdot L \ge \frac{1}{2} \cdot 8^{\frac{1}{3}\lg \frac{2r}{L}}\cdot L = r$. Since $S[i..j]$ touches $B(u)$, this inequality means that $S[i..j]$ is entirely contained in either the concatenation of $B(u_1)$ and $B(u)$, or the concatenation of $B(u)$ and $B(u_2)$. In either case, $S[i..j]$ is contained in $SB(v)$.
  Since any $\alpha$-majority of $S[i..j]$ occurs more than $\alpha b_l$ times in $S[i..j]$, it also occurs more than $\alpha b_l$ times in $SB(v)$. 
  As $C(v)$ includes any symbol that appears more than $\alpha b_l$ times in $SB(v)$, any $\alpha$-majority of $S[i..j]$ is contained in $C(v)$. \qed
\end{proof}

We now describe our query and update algorithms, and analyze space cost.

\begin{lemma}\label{lem:largequery}
Large-sized range $\alpha$-majority queries can be supported in $O(\frac{\lg n}{\alpha \lg\lg n})$ time. 
\end{lemma}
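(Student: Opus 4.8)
The plan is to combine the structures built so far—the candidate lists $C(v)$, the marked-level summaries $Q(v)$ and pointer arrays $P(v)$, and the underlying string representation of Lemma~\ref{lem:string}—into a two-phase query procedure: first locate an appropriate node $v$ of $T$, then verify each symbol in $C(v)$. Given a query range $[i..j]$ with $r=j-i+1\ge L$, I would first compute the target level $l=\lceil\frac13\lg\frac{2r}{L}-1\rceil$ from Lemma~\ref{lem:largecandidate}. Then I would descend from the root of $T$ toward level $l$, but only \emph{jumping between marked levels}: at a node $u$ on the $k$-th marked level whose block touches (indeed contains) $[i..j]$, I use $\searchop$/$\sumop$ on $Q(u)$ (Lemma~\ref{lem:sum}, $O(1)$ time) together with $P(u)$ to identify in $O(1)$ time the child-at-the-next-marked-level whose block contains position $i$ (and hence, since block sizes grow, contains all of $[i..j]$ once we are high enough). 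Because marked levels are spaced $\lceil(1/6)\lg\lg n\rceil$ apart and $T$ has height $O(\log n)$, this takes $O\!\left(\frac{\log n}{\log\log n}\right)$ marked-level steps. Once we reach the marked level at or just above $l$, we walk down the remaining $O(\log\log n)$ ordinary levels one at a time (each step $O(1)$, reading the stored block sizes of the $O(1)$ children) to arrive at a node $v$ at exactly level $l$ whose block touches $[i..j]$. Total navigation cost: $O\!\left(\frac{\log n}{\log\log n}\right)=O\!\left(\frac{\log n}{\alpha\log\log n}\right)$.

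The second phase is verification. By Lemma~\ref{lem:largecandidate}, every $\alpha$-majority of $S[i..j]$ lies in $C(v)$, and $|C(v)|=O(1/\alpha)$. For each candidate symbol $c\in C(v)$ I would compute its frequency in $S[i..j]$ as $\rankop(c,j)-\rankop(c,i-1)$ using the string representation of Lemma~\ref{lem:string}, each such query costing $O(\log n/\log\log n)$ time, and report $c$ iff this count exceeds $\alpha r$. Since there are $O(1/\alpha)$ candidates, this phase costs $O\!\left(\frac{\log n}{\alpha\log\log n}\right)$, matching the navigation cost and giving the claimed bound.

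The main obstacle—and the step I would treat most carefully—is the bookkeeping that shows the navigation actually lands at a node $v$ at level exactly $l$ whose block touches $[i..j]$, and that the marked-level jumps never "overshoot." I need to argue that when descending, the chosen child's block always still contains position $i$, so that once we reach a level whose minimum block size $b_{l'}\ge r$ the whole range $[i..j]$ is contained in that block; the block-size bounds for weight-balanced B-trees ($\frac12\cdot8^{l}L<|B(v)|<2\cdot8^{l}L$ for internal nodes, and the leaf bounds) are exactly what make both the "contains $i$ $\Rightarrow$ contains $[i..j]$" implication and the level-count estimate $32^{\lceil(1/6)\lg\lg n\rceil-1}\le\lg^{5/6}n$ go through, so $Q(v)$ indeed has $O(\lg^{\epsilon}n)$ entries and Lemma~\ref{lem:sum} applies. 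A secondary subtlety is the boundary case where $l$ falls between two marked levels versus exactly on one, and the case $l=0$ (the leaf level), which must be handled separately since the leaf block-size bounds differ from the internal-node bounds; in the leaf case one stops the ordinary-level walk at a leaf and the argument of Lemma~\ref{lem:largecandidate} still supplies the candidate set. Everything else is a routine aggregation of $O(1)$ and $O(\log n/\log\log n)$ costs.
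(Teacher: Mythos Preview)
Your proposal is correct and follows essentially the same approach as the paper: navigate top-down to a node $v$ at level $l$ whose block contains position $i$ (hence touches $S[i..j]$), using the marked-level summaries $Q(\cdot)$ and $P(\cdot)$ to jump in $O(1)$ per marked level and walking ordinary levels only between consecutive marked levels, for $O(\lg n/\lg\lg n)$ navigation time; then verify the $O(1/\alpha)$ candidates in $C(v)$ via two $\rankop$ calls each. One minor simplification: you do not need the intermediate blocks to \emph{contain} all of $[i..j]$ during the descent---tracking position $i$ alone suffices, since ``contains $i$'' already implies ``touches $[i..j]$,'' which is all Lemma~\ref{lem:largecandidate} requires at the final level $l$.
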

\begin{proof}
  Let $[i..j]$ be the query range, $r = j-i+1$ and $l = \lceil \frac{1}{3}\lg \frac{2r}{L} -1 \rceil$.
  We first look for a node $v$ at level $l$ whose block touches $S[i..j]$.
  The obvious approach is to perform a top-down traversal of $T$ to look for a node at level $l$ whose block contains position $i$. 
  During the traversal, we make use of the information about the lengths of the blocks represented by the nodes of $T$ to decide which node at the next level to descend to, and to keep track of the starting position in $S$ of the block represented by the node that is currently being visited.
  More precisely, suppose we visit node $u$ at the current level as we have determined previously that $B(u)$ contains $S[i]$. We also know that the first element in $B(u)$ is $S[p]$.
  Let $u_1, u_2, \ldots, u_d$ denote the children of $u$, where $d \le 32$. 
  To decide which child of $u$ represents a block that contains $S[i]$, we retrieve the lengths of all $|B(u_k)|$'s, and look for the smallest $q$ such that $p+\sum_{k=1}^q |B(u_k)| > i$.
  Node $u_q$ is then the node at the level below whose block contains $S[i]$, and the starting position of its block in $S$ is $p + \sum_{k=1}^{q-1} |B(u_k)|$.
  As $d \le 32$ and we store the length of the block that each node represents, these steps use constant time.

  However, if we follow the approach described in the previous paragraph, we would use $O(\lg n)$ time in total, as $T$ has $O(\lg n)$ levels.
  Thus we make use of the additional data structures stored at marked levels to speed up this process.
  If there is no marked level between the root level and $l$, then the top down traversal only descends $O(\lg \lg n)$ levels, requiring $O(\lg\lg n)$ time only.
  Otherwise, we perform the top-down traversal until we reach the highest marked level. 
  Let $x$ be the node we visit at the highest marked level.
  As $Q(x)$ stores the lengths of the blocks at the next marked level, we can perform a $\searchop$ operation in $Q(x)$ and then follow an appropriate pointer in $P(x)$ to look for the node $y$ at the second highest level that contains $S[i]$, and perform a $\sumop$ operation in $Q(x)$ to determine the starting position of $B(y)$ in $S$. These operations require constant time.
  We repeat this process until we reach the lowest marked level above level $l$, and then we descend level by level until we find node $v$.
  As there are $O(\lg n / \lg\lg n)$ marked levels, the entire process requires $O(\lg n/ \lg\lg n)$ time.

  By Lemma~\ref{lem:largecandidate}, we know that the $\alpha$-majorities of $S[i..j]$ are contained in $C(v)$. 
  We then verify, for each symbol, $c$, in $C(v)$, whether it is indeed an $\alpha$-majority by computing its number, $m$, of occurrences in $S[i..j]$ and comparing $m$ to $\alpha r$.
  As $m = \rankop(c, j) - \rankop(c, i-1)$, $m$ can be computed in $O(\lg n / \lg\lg n)$ time by Lemma~\ref{lem:string}.
  As $|C(v)| = O(1 / \alpha)$, it requires $O(\frac{\lg n}{\alpha \lg\lg n})$ time in total to find out which of these symbols should be included in the answer to the query.
  Therefore, the total query time is $O(\frac{\lg n}{\lg\lg n} + \frac{\lg n}{\alpha \lg\lg n}) = O(\frac{\lg n}{\alpha \lg\lg n})$. \qed
\end{proof}

%We have the following lemma regarding maintaining our data structures under updates:

\begin{lemma}\label{lem:largeupdate}
The data structures described in Section~\ref{sec:large} can be maintained in $O(\frac{\lg n}{\alpha})$ amortized time under update operations. 
\end{lemma}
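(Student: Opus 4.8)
The plan is to bound the amortized cost of an update by accounting separately for five tasks: (i)~updating the compressed string of Lemma~\ref{lem:string} that stores $S$; (ii)~updating the stored block sizes and the auxiliary structures $Q(\cdot)$ and $P(\cdot)$ kept at the marked levels; (iii)~incrementing the counters $U(\cdot)$; (iv)~the periodic reconstruction of the candidate lists $C(\cdot)$; and (v)~rebalancing the weight-balanced B-tree $T$. Throughout I would use that, since $T$ has branching parameter $8$ and leaf parameter $L$, it has $O(\lg n)$ levels, $b_l = \Theta(8^l L)$, and $b_l \ge L = \Theta\!\left(\tfrac{1}{\alpha}\left(\tfrac{\lg n}{\lg\lg n}\right)^2\right) = \omega\!\left(\tfrac{\lg n}{\lg\lg n}\right)$.

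Tasks (i)--(iii) are routine. An {\insop} or {\delop} on $S$ costs $O(\lg n/\lg\lg n)$ by Lemma~\ref{lem:string}; such an update changes $|B(v)|$ only for the $O(\lg n)$ ancestors of the affected leaf (handled in $O(1)$ each) and changes the size of exactly one tracked descendant per marked level, which I would fix with one $O(1)$-time {\updateop} per marked level via Lemma~\ref{lem:sum}; locating the affected leaf and its ancestors, and their neighbours via the extra level pointers, takes $O(\lg n)$. For (iii), I would observe that a position $p$ lies in $SB(v)$ only if the parent of $v$ is the level-$(l{+}1)$ node containing $p$ or one of its two neighbours at that level; since each internal node has at most $32$ children, at most $96$ sets $SB(v)$ per level contain $p$, so a single update to $S$ triggers $O(1)$ counter increments per level, hence $O(\lg n)$ in total.

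Task (iv) is the dominant one, and I would charge it level by level. Reconstructing $C(v)$ for a node $v$ at level $l$ amounts to extracting the $O(8^l L)$ symbols of $SB(v)$, which by Lemma~\ref{lem:string} costs $O(\lg n/\lg\lg n + 8^l L/\log_{\sigma} n) = O(8^l L)$, and then running the linear-time heavy-hitters procedure of Misra and Gries~\cite{mg1982} with $\Theta(1/\alpha)$ counters; so one reconstruction costs $O(8^l L)$ and keeps $|C(v)| = O(1/\alpha)$. Since $C(v)$ is rebuilt only once per $\alpha b_l/2 = \Theta(\alpha 8^l L)$ updates to $SB(v)$, and by (iii) each update to $S$ falls in $O(1)$ of the sets $SB(v)$ at any fixed level, over any sequence of $t$ updates there are $O(t/(\alpha 8^l L))$ reconstructions at level $l$, costing $O(t/\alpha)$ there and $O(t \lg n/\alpha)$ over all $O(\lg n)$ levels; this yields the claimed $O(\lg n/\alpha)$ amortized term.

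For (v) I would invoke the standard property of weight-balanced B-trees~\cite{av2003}: a split or a merge/share at level $\lambda$ happens at most once per $\Omega(8^\lambda L)$ updates routed through the node involved, so the amortized number of level-$\lambda$ structural changes per update is $O(1/(8^\lambda L))$. The delicate point --- and the one I expect to be the main obstacle --- is arguing that such a change costs only $O(8^\lambda L)$, so that rebalancing does not introduce a spurious factor of $1/\alpha$ beyond what (iv) already pays. A \emph{split} leaves every candidate list a valid superset: it does not alter the block of the split node's parent, so every level-$\lambda$ set $SB(v)$ is unchanged --- the split node's list is simply copied to the two new nodes --- while at higher levels any $SB(v)$ that changes can only \emph{shrink}; hence a split only repartitions $O(1)$ children, recomputes $O(1)$ block sizes, copies lists of size $O(1/\alpha)$, and rebuilds the $O(1)$ affected $Q(\cdot)$ and $P(\cdot)$ (each of size $O(\lg^{5/6} n)$ and rebuildable in $O(\lg^{5/6} n)$ time by Lemma~\ref{lem:sum}). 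A \emph{merge} (or share) can enlarge $SB(v)$ only for the $O(1)$ nodes $v$ whose parent's block grows, and those candidate lists I would recompute, again $O(8^\lambda L)$ in total; alternatively, deletions can be handled by global rebuilding every $\Theta(n)$ updates, with the same amortized bound. Multiplying the $O(8^\lambda L)$ per-change cost by the $O(1/(8^\lambda L))$ amortized count gives $O(1)$ per level, hence $O(\lg n)$ amortized for rebalancing. Summing tasks (i)--(v), the amortized update cost is $O(\lg n/\alpha)$.
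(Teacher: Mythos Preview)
Your decomposition into tasks (i)--(v) and the amortization of candidate-list rebuilds match the paper's proof exactly: $O(1/\alpha)$ amortized per level, hence $O(\lg n/\alpha)$ overall. Where you diverge is in handling splits. The paper simply \emph{recomputes} $C(z_1)$, $C(z_2)$, and $C(y)$ for every child $y$ of $z_1$, $z_2$, and of the two flanking level-$\lambda$ nodes, each in $O(b_\lambda)$ time, and amortizes against the $\Omega(b_\lambda)$ insertions between consecutive splits of that node. You instead argue that no recomputation is needed because every candidate list remains a valid superset after a split. That observation is correct, but your wording has the direction reversed: a split at level $\lambda$ leaves every $SB(v)$ at level $\lambda$ and above untouched (the parent's block and its neighbours do not change); the sets $SB(y)$ that actually change --- and can only shrink --- are at level $\lambda{-}1$, namely for the children $y$ of $z_1$, $z_2$, and of the nodes immediately to the left of $z_1$ and to the right of $z_2$. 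Your ``shrinks $\Rightarrow$ still a superset'' reasoning applies precisely there, not at higher levels. Both routes give the same $O(\lg n)$ amortized rebalancing cost; yours saves the extractions but requires this monotonicity argument.

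One omission: you do not account for $L=\lceil\tfrac{1}{\alpha}\lceil\lg n/\lg\lg n\rceil^2\rceil$ changing as $n$ varies under updates. The paper covers this with a global rebuild in $O(n\lg n)$ time whenever $\lceil\lg n/\lg\lg n\rceil$ changes; since that needs $\Omega(n)$ updates, it adds only $O(\lg n)$ amortized per operation.
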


\begin{proof}
  We show only how to support {\insop}; the support for {\delop} is similar. 

  To perform $\insop(c, i)$, we first perform a top down traversal to look for the node $v$ at level $0$ whose block contains $S[i]$. 
  During this traversal, we descend level by level as in Lemma~\ref{lem:largequery}, but we do not use the marked levels to speed up the process.
  For each node $u$ that we visit, we increment the recorded length of $B(u)$.
  In addition, we update the counters $U$ stored in the children of $u$ and in the children of the two nodes that surround $u$.
  There are a constant number of these nodes, and they can all be located in constant time by following either the edges of $T$, or the pointers between two nodes that are next to each other at the same level where we augment $T$. 

  When incrementing the counter $U$ of each node, we find out whether the candidate list of this node has to be rebuilt. To reconstruct the candidate list of a node $x$ at level $l$, we first compute the starting and ending positions of $SB(x)$ in $S$.
  This can be computed in constant time because, during the top down traversal, we have already computed the starting and ending positions of $B(v)$ in $S$, and the three nodes whose blocks form $SB(x)$, as well as the sizes of these three blocks, can be retrieved by following a constant number of pointers starting from $v$.
  We then extract the content of $SB(x)$. As $|SB(x)| \le 96 b_l$ (see discussions earlier in this section) and $b_l \ge L$, by Lemma~\ref{lem:string}, $SB(x)$ can be extracted from $S$ in $O(b_l)$ time.
  We next compute all the symbols that appear in $SB(x)$ more than $\alpha b_l / 2$ times in $O(b_l)$ time~\cite{mg1982}, and these are the elements in the reconstructed $C(x)$.
  Since the counter $U(x)$ has to reach $\alpha b_l / 2$ before $C(x)$ has to be rebuilt, the amortized cost per update is $O(1/\alpha)$.
  
  If $u$ is at a marked level, we perform a $\searchop$ operation in $O(1)$ time to locate the entry of $Q(u)$ that corresponds to the node at the next lower marked level whose block contains $i$, and perform an $\updateop$, again in $O(1)$ time, to increment the value stored in this entry. 
  So far we have used $O(1/\alpha)$ amortized time for each node we visit during the top-down traversal.  
  Since $T$ has $O(\lg n)$ levels, the overall cost we have calculated up to this point is $O((\lg n)/{\alpha})$ amortized time.

  When a node, $z$, at level $l$ of $T$ splits into two nodes $z_1$ and $z_2$, where $z_1$ is to the left of $z_2$, 
  we construct $C(z_1)$ and $C(z_2)$ in $O(b_l)$ time.
  In addition, for any node $y$ that is a child of $z_1$ or $z_2$, or a child of the node immediately to the left of $z_1$ or the right of $z_2$ at the same level,
  we reconstruct $C(y)$ in $O(b_l)$ time.
  As there are a constant number of such nodes, all these structures can be reconstructed in $O(b_l)$ time. 
  If $l$ is a marked level,
but it is not the lowest marked level, we also build $Q(z_1)$, $Q(z_2)$,
$P(z_1)$, and $P(z_2)$. We also have to
rebuild $P(z')$ and $Q(z')$, where $z'$ is the lowest ancestor of $z$ that is
on a marked level. All this takes $O(\lg^{5/6} n) = o(b_l)$ time.
  By the properties of  a weight-balanced B-tree, after a node at level $l$ has been split, it requires at least $\frac{1}{2} \cdot 8^l \cdot L = b_l$ insertions before it can be split again.
  Therefore, we can amortize the cost of reconstructing these data structures over the insertions between reconstructions, and each $\insop$ is thus charged with $O(1)$ amortized cost.
  As each $\insop$ may cause one node at each level of $T$ to split, the overall cost charged to an {\insop} operation is thus $O(\lg n)$. 

  Finally, update operations may cause the value of $L$ to change. For this to happen, the value of $\lceil \frac{\lg n}{\lg\lg n} \rceil$ must change, and this requires $\Omega(n)$ updates. 
  When this happens, we rebuild our data structure in $O(n\lg n)$ time: we can easily precompute the structures for each level of $T$ in linear time and there are $O(\lg n)$ levels. Thus, such rebuilding incurs $O(\lg n)$ amortized time for each update.
  To summarize, {\insop} can be supported in $O((\lg n)/{\alpha})$ amortized time.\qed
\end{proof}

%We now analyze the space cost of our data structures.

\begin{lemma}\label{lem:largespace}
The data structures described in Section~\ref{sec:large} occupy $o(n\lg \sigma)$ bits. 
\end{lemma}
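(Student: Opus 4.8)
The plan is a straightforward accounting argument: bound the number of nodes of $T$, bound the space charged to each node (and to the marked-level auxiliary structures), and verify that the polylogarithmic and $1/\alpha$ factors are absorbed by the leaf parameter $L$.

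First I would bound the number of nodes of $T$. Since each leaf represents a block of at least $L$ symbols, and an internal node at level $l$ represents a block of more than $\frac{1}{2}\cdot 8^l\cdot L$ symbols, the numbers of nodes at levels $0,1,2,\ldots$ are $O(n/L), O(n/(8L)), O(n/(64L)),\ldots$; summing this geometric series over all $O(\lg n)$ levels gives $O(n/L)$ nodes in total. Recalling that $L\ge\frac{1}{\alpha}\bigl(\frac{\lg n}{\lg\lg n}\bigr)^2$ and $\alpha<1$, this is $O\bigl(n\,(\lg\lg n/\lg n)^2\bigr)$, and more precisely $\frac{1}{L}=O\bigl(\frac{\alpha(\lg\lg n)^2}{(\lg n)^2}\bigr)$.

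Next I would account for the data stored at each node. Every node stores its block length, the counter $U(v)$, and a constant number of parent/child/sibling pointers, for $O(\lg n)$ bits; over all $O(n/L)$ nodes this is $O(n\lg n/L)=O\bigl(n(\lg\lg n)^2/\lg n\bigr)=o(n)$. Every node also stores its candidate list $C(v)$, which holds $O(1/\alpha)$ symbols and hence $O(\lg\sigma/\alpha)$ bits (storing the Misra--Gries counts as well would add only $O(\lg n/\alpha)$ bits, which changes nothing); summed over all nodes this is $O\bigl(\frac{n}{L}\cdot\frac{\lg\sigma}{\alpha}\bigr)=O\bigl(n(\lg\lg n/\lg n)^2\lg\sigma\bigr)=o(n\lg\sigma)$, the key cancellation being that the $1/\alpha$ in the candidate list is killed by the $1/\alpha$ hidden in $1/L$. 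For a node on a marked level we additionally store $Q(v)$, represented via Lemma~\ref{lem:sum} with $\epsilon=5/6$ in $O(\lg^{11/6}n)$ bits, and $P(v)$, an array of at most $\lg^{5/6}n$ pointers of $O(\lg n)$ bits, i.e.\ another $O(\lg^{11/6}n)$ bits; since the number of marked-level nodes is at most the total number of nodes $O(n/L)$, this sums to $O(n\lg^{11/6}n/L)=O\bigl(n(\lg\lg n)^2/\lg^{1/6}n\bigr)=o(n)$. Finally, the single precomputed universal table needed by Lemma~\ref{lem:sum} occupies $O(n^{\epsilon'})=o(n)$ bits for an arbitrarily small fixed $\epsilon'>0$. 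Adding these contributions yields $o(n\lg\sigma)$ bits.

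I do not expect a genuinely hard step here; the only points requiring care are (i) confirming the geometric sum over the $O(\lg n)$ levels of $T$ keeps the node count at $O(n/L)$ rather than $O((n/L)\lg n)$, and (ii) tracking the $1/\alpha$ factors so that the $O(1/\alpha)$-size candidate lists — which superficially look like the dominant cost — are tamed by the two extra factors ($1/\alpha$ and $(\lg n/\lg\lg n)^2$) built into $L$.
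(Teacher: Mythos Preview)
Your proof is correct and follows essentially the same accounting as the paper: bound the total number of nodes of $T$ by $O(n/L)$ via the geometric sum, charge $O(\lg n)$ bits per node for pointers/lengths/counters, $O((\lg\sigma)/\alpha)$ bits per candidate list, and then bound the marked-level auxiliary structures. The only cosmetic difference is that the paper charges the $Q(v),P(v)$ entries to the pointed-to nodes (giving $O((n/L)\lg n)$ bits in total), whereas you bound them per marked-level node as $O(\lg^{11/6}n)$ bits each; your estimate is looser by a factor of $\lg^{5/6}n$ but still comfortably $o(n)$.
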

\begin{proof}
  As $T$ has $O(n/L)$ nodes, the structure of $T$, pointers between nodes at the same level, as well as counters and block lengths stored with the nodes, occupy $O(n/L \times \lg n) = O(\frac{\alpha n(\lg\lg n)^2}{\lg n})$ bits in total.
  Each candidate list can be stored in $O((\lg\sigma)/ \alpha)$ bits, so the candidate lists stored in all the nodes use $O(n/L \times (\lg\sigma)/\alpha) = O(\frac{n\lg\sigma(\lg\lg n)^2}{\lg^2 n})$ bits in total.
  %The total size of $Q(v)$ and $P(v)$ is dominated by the cost of storing all $Q(v)$'s and $P(v)$'s at the first marked level. At this level, the total number of entires stored in $Q(v)$ is the number of nodes at level $0$ of $T$, which is $O(n/L)$. 
The size of the structures $Q(v)$ and $P(v)$ can be charged to the pointed
nodes, so there are $O(n/L)$ entries to store.
As each entry of $Q(v)$ uses $O(\lg n)$ bits, all the $Q(v)$s 
%constructed for the first marked level 
occupy $O(n/L \times \lg n) = O(\frac{\alpha n(\lg\lg n)^2}{\lg n})$ bits. The same analysis applies to $P(v)$.
Therefore, the data structures described in this section use $O(\frac{\alpha n(\lg\lg n)^2}{\lg n} + \frac{n\lg\sigma(\lg\lg n)^2}{\lg^2 n}) = o(n\lg\sigma)$ bits. 
  \qed
\end{proof}
     
\subsection{Supporting Medium-Sized Range $\alpha$-Majority Queries}
\label{sec:medium}

We could use the same structures designed in Section~\ref{sec:large} to support medium-sized queries if we simply set the leaf parameter of $T$ to be $L'$ instead of $L$, but then the resulting data structures would not be succinct.
To save space, we build a data structure $D(v)$  for each leaf node $v$ of $T$.
Our idea for supporting medium-sized queries is similar to that for large-sized queries, but since the block represented by a leaf node of $T$ is small, we are able to simplify the idea and the data structures in Section~\ref{sec:large}.
Such simplifications allow us to maintain a multi-level decomposition of $B(v)$ in a hierarchy of lists instead of in a tree, which are further laid out in one contiguous chunk of memory for each leaf node of $T$, to avoid using too much space for pointers. 

We now describe this multi-level decomposition of $B(v)$, which will be used to define the data structure components of $D(v)$.
As we define one set of data structure components in $D(v)$ for each level of this decomposition, we use $D(v)$ to refer to both the data structure that we build for $B(v)$ and the decomposition of $B(v)$. 
To distinguish a level of $D(v)$ from a level of $T$, we number each level of $D(v)$ using a non-positive integer.
At level $-l$, for $l = 0, 1, 2, \ldots, \lceil \lg(L/L') - 1\rceil$, $B(v)$ is partitioned into {\em miniblocks} of length between $L/2^l$ and $L/2^{l-1}$. 
Note that the level $0$ decomposition contains simply one miniblock, which is $B(v)$ itself, as the length of any leaf block in $T$ is between $L$ and $2L$ already.
We define $m_l = L/2^l$, which is the minimum length of a miniblock at level $-l$. 
As $L' < m_{\lceil \lg(L/L') - 1\rceil} \le 2L'$, the minimum length of a miniblock at the lowest level, i.e., level $-\lceil \lg(L/L') - 1\rceil$, is between $L'$ and $2L'$. 

For each miniblock $M$ at level $-l$ of $D(v)$, we define its {\em predecessor}, $\predop(M)$, as follows: If $M$ is not the leftmost miniblock at level $-l$ of $D(v)$, then $\predop(M)$ is the miniblock immediately to its left at the same level.
Otherwise, if $v$ is not the leftmost leaf ($\predop(M)$ is null otherwise), let $v_1$ be the leaf immediately to the left of $v$ in $T$, and $\predop(M)$ is defined to be the rightmost miniblock at level $-l$ of $D(v_1)$.
Similarly, we define the {\em successor}, $\succop(M)$, of $M$ as the miniblock immediately to the right of $M$ at level $-l$ of $D(v)$ if such a miniblock exists. Otherwise, $\succop(M)$ is the leftmost miniblock at level $-l$ of $D(v_2)$ where $v_2$ is the leaf immediately to the right of $v$ in $T$ if $v_2$ exists, or null otherwise.
Then, the candidate list, $C(M)$, of $M$ contains each symbol that occurs more than $\alpha m_l / 2$ times in the concatenation of $M$, $\predop(M)$ and $\succop(M)$.
To maintain $C(M)$ during updates, we use the same strategy in Section~\ref{sec:large} that is used to maintain $C(v)$.
More specifically, we store a counter $U(M)$ so that we can rebuild $C(M)$ after exactly $\alpha m_l / 4$ update operations have been performed to $M$, $\predop(M)$ and $\succop(M)$.
Whenever we perform the reconstruction, we include in $C(M)$ each symbol that occurs more than $\alpha m_l / 4$ times in the concatenation of $M$, $\predop(M)$ and $\succop(M)$.
Since $|\predop(M)| + |M| + |\succop(M)| \le 6 m_l$, the number of symbols included in $C(M)$ is at most $24 / \alpha$.

The precomputed information for each miniblock $M$ includes $|M|$, $C(M)$, and $U(M)$.
These data for miniblocks at the same level, $-l$, of $D(v)$ are chained together in a doubly linked list $L_l(v)$. 
$D(v)$ then contains these $O(\lg(L/L')) = O(\lg\lg n)$ lists. 
We cannot, however, afford storing each list in the standard way using pointers of $O(\lg n)$ bits each, as this would use too much space.
Instead, we lay them out in a contiguous chunk of memory as follows:
We first observe that the number of miniblocks at level $-l$ of $D(v)$ is less than $2L/ (L/2^l) = 2^{l+1}$.
Thus, the total number of miniblocks across all levels is less than $2 \cdot 2^{\lceil \lg(L/L') - 1\rceil + 1} - 1 < 4L/L'$.
We then use an array $A(v)$ of $\lceil 4L/L' \rceil$ fixed-size {\em slots} to store $D(v)$, and each slot stores the precomputed information of a miniblock.

To determine the size of a slot, we compute the maximum number of bits needed to encode the precomputed information for each miniblock $M$. 
$C(M)$  can be stored in $\lceil 24 /\alpha \rceil \cdot \lceil \lg \sigma\rceil$  bits.
As $M$ has less than $2L$ elements, its length can be encoded in $\lceil \lg (2L) \rceil$ bits.
The counter $U(M)$ can be encoded in $\lceil \lg (\alpha m_l / 4) \rceil < \lceil \lg (\alpha L / 2) \rceil \le \lceil \lg (L / 2) \rceil  $ bits.
The two pointers to the neighbours of $M$ in the linked list can be encoded as the indices of these miniblocks in the memory chunk.
Since there are $\lceil 4L/L' \rceil$ slots, each pointer can be encoded in $\lceil \lg \lceil 4L/L' \rceil\rceil$ bits.
Therefore, we set the size of each slot to be $\lceil 24 /\alpha \rceil \cdot \lceil \lg \sigma\rceil + 2 \lceil \lg L \rceil + 2\lceil \lg \lceil 4L/L' \rceil\rceil$ bits. 

We prepend this memory chunk with a header. This header encodes the indices of the slots that store the head of each $L_l(v)$. 
As there are $\lceil \lg(L/L')\rceil$ levels and each index can be encoded in $\lceil \lg \lceil 4L/L' \rceil\rceil$ bits, the header uses $\lceil \lg(L/L')\rceil \cdot \lceil \lg \lceil 4L/L' \rceil\rceil$ bits.
Clearly our memory management scheme allows us to traverse each doubly linked list $L_l(v)$ easily.
When miniblocks merge or split during updates, we need to perform insertions and deletions in the doubly linked lists.
To facilitate these updates, we always store the precomputed information for all miniblocks in $D(v)$ in a prefix of $A(v)$, and keep track of the number of used slots of $A(v)$. 
When we perform an insertion into a list $L_l(v)$, we use the first unused slot of $A$ to store the new information, and update the header if the newly inserted list element becomes the head.
When we perform a deletion, we copy the content of the last used slot (let $M'$ be the miniblock that corresponds to it) into the slot corresponding to the deleted element of $L_l(v)$. We also follow the pointers encoded in the slot for $M'$ to locate the neighbours of $M'$ in its doubly linked list, and update pointers in these neighbours that point to $M'$. 
If $M'$ is the head of a doubly linked list (we can determine which list it is using $|M'|$), we update the header as well.
The following lemma shows that our memory management strategy does, indeed, save space:

\begin{lemma}\label{lem:mediumspace}
  The data structures described in Section~\ref{sec:medium} occupy $o(n\lg \sigma)$ bits. 
\end{lemma}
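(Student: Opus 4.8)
The plan is to sum, over all leaf nodes $v$ of $T$, the size of the memory chunk $A(v)$ together with its header, and show the total is $o(n\lg\sigma)$ by exploiting the fact that there are only $O(n/L')$ leaves while each chunk is small — of size $O((1/\alpha)\lg\sigma + \mathrm{polylog}(n))$ bits — and that $L' = \Theta\!\bigl(\frac{1}{\alpha}\cdot\frac{\lg n}{\lg\lg n}\bigr)$ is large. First I would recall from the construction that each leaf block of $T$ has length between $L$ and $2L-1$, and therefore there are $\Theta(n/L) = O(n/L)$ leaves. Each leaf $v$ owns one array $A(v)$ of $\lceil 4L/L'\rceil$ slots. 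Since $L/L' = \lceil\lg n/\lg\lg n\rceil = O(\lg n/\lg\lg n)$, we have $O(L/L')$ slots per leaf.

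Next I would bound the slot size. By the construction, each slot occupies $\lceil 24/\alpha\rceil\cdot\lceil\lg\sigma\rceil + 2\lceil\lg L\rceil + 2\lceil\lg\lceil 4L/L'\rceil\rceil$ bits; since $L = O(\frac{1}{\alpha}(\lg n/\lg\lg n)^2) = \mathrm{poly}(n)$ and $L/L' = O(\lg n)$, this is $O((\lg\sigma)/\alpha + \lg n)$ bits per slot. Multiplying by the $O(L/L') = O(\lg n/\lg\lg n)$ slots per leaf, and adding the header of $\lceil\lg(L/L')\rceil\cdot\lceil\lg\lceil 4L/L'\rceil\rceil = O((\lg\lg n)\lg\lg n) = O((\lg\lg n)^2)$ bits (which is dominated), each $D(v)$ uses $O\!\bigl(\frac{\lg n}{\lg\lg n}\cdot(\frac{\lg\sigma}{\alpha}+\lg n)\bigr)$ bits. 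Summing over the $O(n/L)$ leaves gives a total of
\[
O\!\left(\frac{n}{L}\cdot\frac{\lg n}{\lg\lg n}\cdot\left(\frac{\lg\sigma}{\alpha}+\lg n\right)\right).
\]
Substituting $L = \Theta\!\bigl(\frac{1}{\alpha}(\lg n/\lg\lg n)^2\bigr)$, the factor $\frac{1}{L}\cdot\frac{\lg n}{\lg\lg n} = \Theta\!\bigl(\frac{\alpha\lg\lg n}{\lg n}\bigr)$, so the total is $O\!\bigl(\alpha\lg\lg n\cdot\frac{\lg\sigma}{\lg n}\cdot n + \alpha n\lg\lg n\bigr)$. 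The first term is $o(n\lg\sigma)$ because $\frac{\alpha\lg\lg n}{\lg n}=o(1)$, and the second term is $o(n\lg\sigma)$ since $\alpha\lg\lg n = o(\lg n) = o(\lg n)$ and $\lg\sigma \ge 1$ (more precisely $\alpha\lg\lg n\cdot n = o(n\lg n)$ and we need $o(n\lg\sigma)$; handling the case of very small $\sigma$ requires a sharper bound, see below).

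I expect the one genuinely delicate point to be the term that is independent of $\sigma$ — namely the $O(\alpha n\lg\lg n)$ contribution coming from the $2\lceil\lg L\rceil$ length fields and the pointer fields in each slot — when $\sigma$ is tiny (say constant), so that $n\lg\sigma = \Theta(n)$ and $o(n\lg\sigma)=o(n)$. To handle this cleanly I would note that $\alpha$ is a fixed constant (as stated, we are in the fixed-$\alpha$ setting), so $\alpha\lg\lg n = o(\lg n)$, and then re-examine the per-slot cost: the dominant non-$\sigma$ part is really $O(\lg(L/L')) = O(\lg\lg n)$ bits for the two intra-chunk pointers plus $O(\lg L)=O(\lg n)$ for the length — but the length need only distinguish values in $[m_l/1, 2m_l]$ at level $-l$, or can simply be stored as an offset, so $O(\lg L)$ is the honest bound. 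Carrying the $O(\lg L)=O(\lg n)$ term through, the per-leaf cost is $O\bigl(\frac{\lg n}{\lg\lg n}\cdot\lg n\bigr)=O(\lg^2 n/\lg\lg n)$ bits, and summing over $O(n/L)=O\bigl(\frac{\alpha n(\lg\lg n)^2}{\lg n}\bigr)$ leaves yields $O\bigl(\alpha n\lg n\lg\lg n\bigr)$ — which is $\omega(n)$ and would break the bound for constant $\sigma$. The resolution, which I would make explicit, is that the slot count per leaf is $O(L/L')=O(\lg n/\lg\lg n)$ but the length and pointer fields are $O(\lg(L/L'))=O(\lg\lg n)$ bits, not $O(\lg L)$: a miniblock length at level $-l$ lies in a range of size $m_l \le L$, yet what we must store to navigate is only its index among $\le 4L/L'$ slots and we may store its length relative to a per-level base, giving $O(\lg\lg n)$-bit fields. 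Then each $D(v)$ is $O\bigl(\frac{\lg n}{\lg\lg n}(\frac{\lg\sigma}{\alpha}+\lg\lg n)\bigr)$ bits, the sum over leaves is $O\bigl(\frac{\alpha n(\lg\lg n)^2}{\lg n}\cdot\frac{\lg n}{\lg\lg n}(\frac{\lg\sigma}{\alpha}+\lg\lg n)\bigr)=O(n\lg\sigma\lg\lg n + \alpha n(\lg\lg n)^2)$, which is still not quite $o(n\lg\sigma)$. I would therefore ultimately rely on the construction's actual slot definition — where the $\lg L$ fields are genuine — and instead get the saving from the leaf count alone: with $O(n/L)$ leaves and $O((\lg n/\lg\lg n)\cdot\lg n)$ bits per leaf, the total $O(\alpha n\lg n\lg\lg n / L \cdot (\lg n/\lg\lg n))$ — wait, this needs the honest recomputation, which is exactly the routine arithmetic I will carry out in the actual proof, checking that $\frac{1}{L}=\Theta\bigl(\frac{\alpha(\lg\lg n)^2}{\lg^2 n}\bigr)$ kills two full factors of $\lg n$. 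The upshot: the main obstacle is bookkeeping the exact slot width against the exact leaf count so that the product telescopes below $n\lg\sigma$; the structural content of the lemma is simply that both quantities are small.
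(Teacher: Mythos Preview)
Your overall structure---count leaves, bound the slot size, multiply out---is exactly the paper's approach. The gap is in one estimate: you bound the per-slot length/counter fields by $\lg L = O(\lg n)$, and this is what produces the troublesome $O(\alpha n\lg\lg n)$ term that you then cannot kill for constant $\sigma$. Your attempted fixes (storing lengths relative to a per-level base, etc.) change the data structure rather than analyze it, and as you yourself note, none of them closes the gap.

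The missing observation is much simpler. Since $L = \Theta\!\bigl(\frac{1}{\alpha}(\lg n/\lg\lg n)^2\bigr)$, we have $\lg L = O\bigl(\lg\frac{1}{\alpha} + \lg\lg n\bigr)$, not merely $O(\lg n)$. Now $\lg\frac{1}{\alpha} \le \frac{1}{\alpha} \le \frac{\lg\sigma}{\alpha}$ (using $\sigma\ge 2$), so the $\lg L$ contribution is absorbed into the candidate-list term and each slot costs $O\bigl(\frac{\lg\sigma}{\alpha}+\lg\lg n\bigr)$ bits, not $O\bigl(\frac{\lg\sigma}{\alpha}+\lg n\bigr)$. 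With this correct slot bound, each $D(v)$ uses $O\bigl(\frac{\lg\sigma\lg n}{\alpha\lg\lg n}+\lg n\bigr)$ bits, and multiplying by the $O(n/L)=O\bigl(\frac{\alpha n(\lg\lg n)^2}{\lg^2 n}\bigr)$ leaves gives
\[
O\!\left(\frac{n\lg\sigma\lg\lg n}{\lg n}+\frac{\alpha n(\lg\lg n)^2}{\lg n}\right),
\]
both terms of which are $o(n\lg\sigma)$ with no case analysis on $\sigma$ required.
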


\begin{proof}
  We first analyze the size of the memory chunk  storing $D(v)$ for each leaf $v$ of $T$.
  By our analysis in previous paragraphs, we observe that the header of this chunk uses $O((\lg\lg n)^2)$ bits.
  Each slot of $A(v)$ uses $O(\frac{\lg\sigma}{\alpha} + \lg\lg n)$ bits, and $A(v)$ has $O(\lg n / \lg\lg n)$ entries.
  Therefore, $A(v)$ occupies $O(\frac{\lg\sigma\lg n}{\alpha\lg\lg n} + \lg n)$ bits.
  Hence the total size of the memory chunk of each leaf of $T$ is $O(\frac{\lg\sigma\lg n}{\alpha\lg\lg n} +\lg n)$ bits.
  As there are $O(n/L)$ leaves in $T$, the data structures described in this section use $O(\frac{n\lg\sigma\lg\lg n}{\lg n} + \frac{\alpha n(\lg\lg n)^2}{\lg n}) = o(n\lg\sigma)$ bits. 
\qed
\end{proof}

We now show how to support query and update operations.

\begin{lemma}\label{lem:mediumquery}
Medium-sized range $\alpha$-majority queries can be supported in $O(\frac{\lg n}{\alpha \lg\lg n})$ time. 
\end{lemma}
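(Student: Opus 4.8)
The plan is to mirror, inside a single leaf block, the argument used for large-sized queries. Let $[i..j]$ be the query range and $r=j-i+1$, so $L'<r<L$. First I would pick the appropriate level of the decomposition: choose $l$ with $m_l\in[r,2r)$, i.e.\ $l=\lfloor\lg(L/r)\rfloor$. Since $r<L$ we get $l\ge 0$, and since $r>L'$ together with $m_{\lceil\lg(L/L')-1\rceil}\le 2L'<2r$ we get $l\le\lceil\lg(L/L')-1\rceil$, so level $-l$ is genuinely one of the levels present in every $D(v)$. Moreover, the number of miniblocks at level $-l$ of any leaf is less than $2L/m_l\le 2L/r$, and since $r>L'=\Theta((1/\alpha)\lg n/\lg\lg n)$ while $L=\Theta((1/\alpha)(\lg n/\lg\lg n)^2)$, this is $O(\lg n/\lg\lg n)$.

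Next I would establish the analogue of Lemma~\ref{lem:largecandidate} for miniblocks: if $v$ is the leaf of $T$ with $S[i]\in B(v)$ and $M$ is the level-$(-l)$ miniblock of $D(v)$ containing $S[i]$ (with $\predop(M)$ and $\succop(M)$ allowed to cross leaf boundaries), then every $\alpha$-majority of $S[i..j]$ lies in $C(M)$. Indeed, $S[i..j]$ starts inside $M$ and has length $r\le 1+m_l\le 1+|\succop(M)|$ (or, when $\succop(M)$ is null, $M$ already reaches the end of $S$), so $S[i..j]\subseteq M\cup\succop(M)\subseteq\predop(M)\cup M\cup\succop(M)$; and since $m_l<2r$, any $\alpha$-majority of $S[i..j]$ occurs more than $\alpha r>\alpha m_l/2$ times in that concatenation, so by the rebuilding invariant (a symbol occurring more than $\alpha m_l/2$ times in $\predop(M)\cup M\cup\succop(M)$ is always in $C(M)$, because a rebuild inserts everything with more than $\alpha m_l/4$ occurrences and at most $\alpha m_l/4$ updates separate two rebuilds) it belongs to $C(M)$.

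The query algorithm is then: (1) compute $r$ and $l$ in $O(1)$ time; (2) run the marked-level top-down traversal of $T$ exactly as in Lemma~\ref{lem:largequery}, descending all the way to the leaf $v$ containing $S[i]$ (level $0$ of $T$ is a marked level), also recovering the starting position of $B(v)$ in $S$, in $O(\lg n/\lg\lg n)$ time; (3) walk the doubly linked list $L_l(v)$ from its head (located via the header of $v$'s memory chunk), accumulating miniblock lengths until reaching the miniblock $M$ containing $S[i]$, in time proportional to the number of level-$(-l)$ miniblocks, i.e.\ $O(\lg n/\lg\lg n)$; (4) for each of the at most $24/\alpha$ symbols $c\in C(M)$, compute its multiplicity in $S[i..j]$ as $\rankop(c,j)-\rankop(c,i-1)$ in $O(\lg n/\lg\lg n)$ time by Lemma~\ref{lem:string}, and output $c$ iff this exceeds $\alpha r$. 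Correctness follows from the previous paragraph, and the total time is $O(\lg n/\lg\lg n)+O((1/\alpha)\lg n/\lg\lg n)=O\!\left(\frac{\lg n}{\alpha\lg\lg n}\right)$.

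The main obstacle, I expect, is the bookkeeping around Steps~1 and~3: confirming that a level $-l$ with $m_l\in[r,2r)$ always exists among the levels actually materialized in $D(v)$, and that the walk over $L_l(v)$ stays inside the $O(\lg n/\lg\lg n)$ budget — both rest on the inequalities $L'<r<L$ and the exact definitions of $L$, $L'$, and $m_l$. The containment step also needs a small case analysis for query ranges that straddle a leaf boundary, which is precisely why $\predop$ and $\succop$ were defined to follow the inter-leaf pointers.
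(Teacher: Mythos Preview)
Your proposal is correct and follows essentially the same line as the paper's proof: locate the leaf via the marked-level traversal, choose the appropriate level of the miniblock decomposition, scan $L_l(v)$ linearly to find $M$, use the containment and frequency-threshold argument to conclude that every $\alpha$-majority lies in $C(M)$, and then verify each candidate with two $\rankop$ calls. The only cosmetic difference is that the paper takes $l=\lceil\lg(L/r)-1\rceil$ (yielding $m_l/2\le r<m_l$) rather than your $l=\lfloor\lg(L/r)\rfloor$ (yielding $r\le m_l<2r$); both choices make the containment and threshold inequalities go through.
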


\begin{proof}
  Let $[i..j]$ be the query range and let $r = j-i+1$.  We first perform a top down traversal in $T$ to locate the leaf, $v$, that represents a block containing $S[i]$ in $O(\frac{\lg n}{\lg\lg n})$ time using the approach described in the proof of Lemma~\ref{lem:largequery}.
  In this process, we can also find the starting position of $B(v)$ in $S$. 

  We next make use of $D(v)$ to answer the query as follows. Let $l = \lceil \lg(L/r) - 1\rceil$.
  As $m_l = L/2^{\lceil \lg(L/r) - 1\rceil}$, we have $m_l / 2 \le r < m_l$.
  We then scan the list $L_l(v)$ to look for a miniblock, $M$, that contains $S[i]$ at level $-l$.
  This can be done by first locating the head of $L_l(v)$ from the header of the memory chunk that stores $D(v)$, and then performing a linear scan, computing the starting position of each miniblock in $L_l(v)$ along the way.
  As $L_l(v)$ has at most $O(L/L') = O(\frac{\lg n}{\lg\lg n})$ entries, we can locate $M$ in $O(\frac{\lg n}{\lg\lg n})$ time. 
  %During this process, we can also find out the starting position of $M$ in $S$, as well as the length of $M$.
  Since $m_l > r$, $S[i..j]$ is either entirely contained in the concatenation of $\predop(M)$ and $M$, or in the concatenation of $M$ and $\succop(M)$.
  Thus each $\alpha$-majority of $S[i..j]$ must occur more than $\alpha r > \alpha m_l / 2$ times in the concatenation of $\predop(M)$, $M$ and $\succop(M)$.
  Therefore, each $\alpha$-majority of $S[i..j]$ is contained in $C(M)$.
  We can then perform $\rankop$ operations in $S$ to verify whether each symbol in $C(M)$ is indeed an $\alpha$-majority of $S[i..j]$.
  As $C(M)$ has $O(1/\alpha)$ symbols, this process requires $O(\frac{\lg n}{\alpha \lg\lg n})$ time.
  The total query time is hence $O(\frac{\lg n}{\alpha \lg\lg n})$.
\qed
\end{proof}

\begin{lemma}\label{lem:mediumupdate}
The data structures described in Section~\ref{sec:medium} can be maintained in $O(\frac{\lg n}{\lg\lg n} + \frac{\lg\lg n}{\alpha})$ amortized time under update operations. 
\end{lemma}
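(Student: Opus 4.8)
The plan is to follow the template of the proof of Lemma~\ref{lem:largeupdate}: I describe the maintenance for $\insop$, the treatment of $\delop$ being symmetric, and I charge each rebuild to the updates that occur between it and the previous rebuild of the same structure. To perform $\insop(c,i)$ we first locate the leaf $v$ of $T$ whose block contains $S[i]$, together with the starting position of $B(v)$ in $S$; this is the same top-down traversal used in Lemma~\ref{lem:largequery}, which costs $O(\lg n/\lg\lg n)$ time and is in any case already performed when $T$ itself is maintained in Lemma~\ref{lem:largeupdate}. We then walk down the $O(\lg\lg n)$ levels of $D(v)$: at level $-l$ we scan the doubly linked list $L_l(v)$ from its head, accumulating miniblock lengths, to find the miniblock $M$ that contains $S[i]$, increment its recorded length, and increment the counter $U(\cdot)$ of each of the $O(1)$ miniblocks whose candidate list depends on $M$, namely $M$, $\predop(M)$ and $\succop(M)$. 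Since level $-l$ holds fewer than $2^{l+1}$ miniblocks and the deepest level holds $O(L/L') = O(\lg n/\lg\lg n)$ of them, the scans over all levels form a geometric series summing to $O(\lg n/\lg\lg n)$, while the counter and length increments cost $O(1)$ per level and hence $O(\lg\lg n)$ in total.

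Next I analyze the reconstructions triggered by these counters. When $U(M')$ for a miniblock $M'$ at level $-l$ reaches $\alpha m_l/4$, we recompute $C(M')$: the endpoints of the concatenation of $\predop(M')$, $M'$ and $\succop(M')$ are obtained by following $O(1)$ list pointers from the known starting position of $B(v)$, this string of length at most $6m_l$ is extracted from $S$ in $O(\lg n/\lg\lg n + m_l/\log_\sigma n) = O(\lg n/\lg\lg n + m_l)$ time by Lemma~\ref{lem:string}, and Misra--Gries~\cite{mg1982} then builds $C(M')$ in $O(m_l)$ further time. Since this happens only once per $\alpha m_l/4$ updates to the neighbourhood of $M'$, the amortized cost charged to an update at level $-l$ is $O(\frac{\lg n}{\alpha m_l \lg\lg n} + \frac{1}{\alpha})$. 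The second term summed over the $O(\lg\lg n)$ levels gives $O(\lg\lg n/\alpha)$. For the first term, the deepest level has $m_l = \Theta(L') = \Theta(\frac1\alpha \lg n/\lg\lg n)$, so the term is $\Theta(1)$ there and, being $\Theta(\frac{2^l\lg\lg n}{\lg n})$ once we substitute $m_l = L/2^l$ and $L = \Theta(\frac1\alpha(\lg n/\lg\lg n)^2)$, it decays geometrically towards the root; its sum over all levels is therefore $O(1)$. Thus all counter-triggered reconstructions cost $O(1 + \lg\lg n/\alpha)$ amortized per update.

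It remains to account for structural changes. When a miniblock's length leaves the length interval allowed at its level we split or merge miniblocks at level $-l$; because each level partitions $B(v)$ independently this never cascades across levels, and a miniblock created by a split or merge at level $-l$ needs $\Omega(m_l)$ further updates before it is split or merged again, so --- rebuilding the $O(1)$ affected candidate lists in $O(\lg n/\lg\lg n + m_l)$ time and doing the constant-time pointer surgery in $L_l(v)$ through the slot array $A(v)$ and its header --- the same geometric-series estimate shows this amortizes to $O(\lg\lg n)$ per update. When a leaf of $T$ splits (an event already paid for, once per $\Omega(L)$ insertions, in Lemma~\ref{lem:largeupdate}) we build $D(\cdot)$ from scratch for the two new leaves in $O(L\lg\lg n)$ time --- summing the $O(m_l)$-per-miniblock cost over the fewer than $2^{l+1}$ miniblocks of each of the $O(\lg\lg n)$ levels --- and rebuild the $O(\lg\lg n)$ candidate lists of the boundary miniblocks of the neighbouring leaves whose $\predop$/$\succop$ pointers changed, which is $O(\lg\lg n)$ amortized; and when $\lceil\lg n/\lg\lg n\rceil$, and hence $L$, changes after $\Omega(n)$ updates we rebuild all of the $D(\cdot)$ in $O(n\lg\lg n)$ time, i.e.\ $O(\lg\lg n)$ amortized. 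Summing all contributions yields the claimed $O(\lg n/\lg\lg n + \lg\lg n/\alpha)$. I expect the main obstacle to be the amortized bookkeeping of the reconstructions: one must check that the $O(\lg n/\lg\lg n)$ extraction overhead, incurred once per $\Theta(\alpha m_l)$ updates, stays $O(1)$ per update even at the deepest levels, which works precisely because $L'$ was chosen as $\Theta(\frac1\alpha \lg n/\lg\lg n)$ so that $\frac{\lg n/\lg\lg n}{\alpha m_l} = \Theta(1)$ there, and that the total number of miniblocks over the $O(\lg\lg n)$ levels is only $O(L/L')=O(\lg n/\lg\lg n)$, not a $\lg\lg n$ factor more.
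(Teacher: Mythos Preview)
Your argument is correct and follows essentially the same route as the paper's proof: locate the leaf in $O(\lg n/\lg\lg n)$, scan the $O(\lg\lg n)$ levels of linked lists (whose total length is $O(L/L')=O(\lg n/\lg\lg n)$) to update lengths and counters, amortize each candidate-list rebuild to $O(1/\alpha)$ per update at its level, and charge splits and global rebuilds separately for an additional $O(\lg\lg n)$. Two small remarks: you should mention that when $M$ is the leftmost or rightmost miniblock of $D(v)$ you must also reach into $L_l(v_1)$ or $L_l(v_2)$ to touch $\predop(M)$ or $\succop(M)$ (the paper scans all three lists), and your careful bookkeeping of the $O(\lg n/\lg\lg n)$ extraction overhead is sound but in fact unnecessary, since $\alpha m_l \ge \alpha L' \ge \lg n/\lg\lg n$ already makes that term $O(1)$ at every level, which is why the paper simply quotes $O(1/\alpha)$ amortized per rebuild.
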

\begin{proof}
    We show only how to support {\insop}; the support for {\delop} is similar.

    To perform $\insop(c, i)$, we first perform a top down traversal in $T$ to locate the leaf, $v$, that represents a block containing $S[i]$ in $O(\frac{\lg n}{\lg\lg n})$ time.
    We then increment the recorded lengths of all the miniblocks that contain $S[i]$. We also increment the counters $U$ of these miniblocks, as well as the counters of their predecessors and successors. 
    All the miniblocks whose counters should be incremented are located in $D(v)$, $D(v_1)$ and $D(v_2)$, where $v_1$ and $v_2$ are the leaves immediately to the left and right of $v$ in $T$.
    At each level $-l$, we scan each doubly linked list $L_l(v)$, $L_l(v_1)$ and $L_l(v_2)$ to locate these miniblocks.
    Since $D(v)$, $D(v_1)$ and $D(v_2)$ have $O(\frac{\lg n}{\lg\lg n})$ miniblocks in total over all levels, it requires $O(\frac{\lg n}{\lg\lg n})$ to find these miniblocks and update them.

    The above process can find all these miniblocks, as well as their starting and ending positions in $S$.
    It may be necessary to reconstruct the candidate list of these miniblocks. Similarly to the analysis in the proof of Lemma~\ref{lem:largeupdate}, the candidate list of each of these miniblocks can be maintained in $O(1/\alpha)$ amortized time.
    Since there are $O(\lg \lg n)$ levels in $D(v)$, $D(v_1)$ and $D(v_2)$, and only a constant number of miniblocks needing rebuilding at each level, $O((\lg\lg n) / \alpha)$ amortized time will be required to reconstruct all of them.

    An insertion may also cause a miniblock $M$ to split. As in the proof of Lemma~\ref{lem:largeupdate}, we compute the candidate lists and other required information for the miniblocks created as a result of the split in time linear in the length of $M$, and amortize the cost over the insertions that lead to the split. As the number of these insertions is also proportional to the length of $M$, the amortized cost is again $O(1)$. As there can possibly be a split at each level of $D(v)$, it requires $O(\lg\lg n)$ amortized time to handle them.
    Finally, when the value of $L'$ changes, we rebuild all the data structures designed in this section.
    Since these data structures are constructed for $O(\lg\lg n)$ levels and the structures for each level can be rebuilt in linear time, this process incurs $O(\lg\lg n)$ amortized time.
    Therefore, the total time required to support {\insop} is $O(\frac{\lg n}{\lg\lg n} + \frac{\lg\lg n}{\alpha})$. 
\qed
\end{proof}

Combining Lemma~\ref{lem:string} and Lemmas~\ref{lem:largequery}-\ref{lem:mediumupdate}, we obtain our first result, when the structure is queried for $\alpha$-majorities.

\begin{theorem} \label{thm:alpha}
  For any $0<\alpha<1$, a sequence of length $n$ over an alphabet of size $\sigma$ can be represented using $nH_k+ o(n\lg \sigma)$ bits for any $k = o(\log_{\sigma} n)$ to answer range $\alpha$-majority queries in $O(\frac{\lg n}{\alpha \lg\lg n})$ time, and to support symbol insertions and deletions in $O(\frac{\lg n}{\alpha})$ amortized time.
\end{theorem}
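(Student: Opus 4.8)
The plan is to assemble the three query-size regimes into a single data structure and then just read off the bounds from the lemmas already proved. First I would store the input sequence $S$ using the dynamic compressed string representation of Lemma~\ref{lem:string}, which already accounts for the $nH_k+o(n\lg\sigma)$ term in the space bound and supports $\accop$, $\rankop$, $\selop$, $\insop$, $\delop$ in $O(\lg n/\lg\lg n)$ time as well as substring extraction in $O(\lg n/\lg\lg n+\ell/\log_{\sigma} n)$ time. On top of this I would maintain the weight-balanced B-tree $T$ together with its block lengths, counters, candidate lists, and marked-level structures $Q(v),P(v)$ from Section~\ref{sec:large} (serving large-sized queries), and the per-leaf memory chunks $D(v)$ from Section~\ref{sec:medium} (serving medium-sized queries). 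No auxiliary structure is needed for small-sized queries, since those are answered directly from $S$.

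For a query with range $[i..j]$ I would compute $r=j-i+1$ and branch on the thresholds $L'$ and $L$. If $r\le L'$ the query is small: extract $S[i..j]$ in $O(\lg n/\lg\lg n+r/\log_{\sigma} n)=O(\lg n/(\alpha\lg\lg n))$ time via Lemma~\ref{lem:string} and run the Misra--Gries two-pass algorithm on it in $O(r)=O(\lg n/(\alpha\lg\lg n))$ time. If $L'<r<L$ I would invoke the medium-query procedure of Lemma~\ref{lem:mediumquery}, and if $r\ge L$ the large-query procedure of Lemma~\ref{lem:largequery}; both run in $O(\lg n/(\alpha\lg\lg n))$ time. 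Hence in every regime the query time is $O(\lg n/(\alpha\lg\lg n))$. The one point needing care is that the three cases partition all possible values of $r$ consistently, and that whichever procedure is dispatched has its hypotheses satisfied at the boundaries $r=L'$ and $r=L$ (in particular the strict inequality $r>L$ assumed in Lemma~\ref{lem:largecandidate}); this amounts to fixing the comparison operators in the case split so that the $r>L$ regime is exactly the one routed to Section~\ref{sec:large}.

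For an update I would first perform the operation on the base string via Lemma~\ref{lem:string} in $O(\lg n/\lg\lg n)$ time, then propagate it into the large-query structures using Lemma~\ref{lem:largeupdate} in $O(\lg n/\alpha)$ amortized time and into the medium-query structures using Lemma~\ref{lem:mediumupdate} in $O(\lg n/\lg\lg n+\lg\lg n/\alpha)$ amortized time. Since these three amortization arguments charge disjoint work --- the base string, the tree nodes with their candidate lists and counters, and the miniblock lists inside the $D(v)$'s --- the credit accounts are independent and the costs simply add, giving $O(\lg n/\alpha)$ amortized in total. I would also note, as already argued inside Lemmas~\ref{lem:largeupdate} and~\ref{lem:mediumupdate}, that updates change $n$ and hence $L$ and $L'$, but a change of $\lceil\lg n/\lg\lg n\rceil$ requires $\Omega(n)$ updates and triggers a full $O(n\lg n)$ rebuild, contributing only $O(\lg n)$ amortized per update and thus not affecting the bound.

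Finally, for the space I would add the three contributions: $nH_k+o(n\lg\sigma)$ bits for the base string (Lemma~\ref{lem:string}), $o(n\lg\sigma)$ bits for the tree, same-level pointers, counters, block lengths, candidate lists, and marked-level structures (Lemma~\ref{lem:largespace}), and $o(n\lg\sigma)$ bits for the per-leaf memory chunks (Lemma~\ref{lem:mediumspace}), for a total of $nH_k+o(n\lg\sigma)$ bits for any $k=o(\log_{\sigma}n)$. I do not expect a genuine obstacle: all the technical content lives in the component lemmas, and the only real diligence required is the consistency of the query-size thresholds and the observation that the three amortized update analyses run over disjoint credit accounts and hence may be summed.
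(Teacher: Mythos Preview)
Your proposal is correct and follows exactly the paper's approach: the paper's own proof is the single sentence ``Combining Lemma~\ref{lem:string} and Lemmas~\ref{lem:largequery}--\ref{lem:mediumupdate}, we obtain our first result,'' and you have simply spelled out how that combination goes. Your observation about the boundary case $r=L$ (the paper defines large queries by $r\ge L$ but Lemma~\ref{lem:largecandidate} is stated for $r>L$) is a genuine minor inconsistency in the paper, and your fix of adjusting the comparison operators in the dispatch is the right remedy.
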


\section{Supporting $\beta$-Majorities}
\label{sec:beta}

Theorem~\ref{thm:alpha} supports range $\alpha$-majority queries, where $\alpha$ is chosen at construction time. We now enhance our data structure to find range $\beta$-majorities, for any $\beta \ge \alpha$ given at query time together with the interval $[i..j]$. While it is easy to answer those queries in time $O(\frac{\lg n}{\alpha \lg\lg n})$, our goal is to reach time $O(\frac{\lg n}{\beta \lg\lg n})$. Updates are still carried out in amortized time $O(\frac{\lg n}{\alpha})$.

Although we have not used this in previous sections, note that we can focus our attention in the case $\beta > 1/\sigma$, since otherwise we can directly check the range of $S$ for each of the $\sigma$ symbols $c$, reporting those where $\rankop(c, j) - \rankop(c, i-1) > \beta r$, all in time $O(\frac{\sigma \log n}{\log\log n}) = O(\frac{\lg n}{\beta \lg\lg n})$. Thus, at construction time we can set $\alpha$ to $1/\sigma$ if $\alpha$ turns out to be smaller. This implies, in particular, that all our $\frac{1}{\alpha}$ in the complexities can be replaced by $\min(\frac{1}{\alpha},\sigma)$. We will also use the fact that $\log\frac{1}{\alpha} = O(\log\sigma) \cap O(\log n)$. Similarly, it makes sense to consider $\alpha \le 1/2$ only, as otherwise we use the solution for $\alpha=1/2$ and report only the true $\alpha$-majorities found, within the same complexity.

\subsection{Large and Medium-Sized Intervals}

For large and medium-sized intervals, it is not difficult to answer $\beta$-majority queries within the desired time. Note that, in those cases, the crux of the solution is to verify a list of candidates, $C(v)$ in the block $v$ (for large intervals) or $C(M)$ in the miniblock $M$ (for medium-sized intervals), both of size $O(1/\alpha)$. It is sufficient that those lists are sorted by decreasing frequency of the elements and that we stop verifying them when reaching an element with frequency below $\beta r$. Since $r > b_l$ in large intervals and $r \ge m_l/2$ in medium-sized intervals, there can be only $O(1/\beta)$ such candidates and we solve the query in time $O(\frac{\log n}{\beta \log\log n})$.

We can maintain those list only approximately sorted, however. When the lists are created, we sort them by decreasing frequency.
%according to $\lceil \log (1/f(x)) \rceil$, where $f(x)$ is the relative frequency of element $x$ in the range. This produces only $O(\log n)$ different values, so radix sort orders the elements in linear time (we sort $\Theta(L')$ to $\Theta(L)$ elements, so a constant number of passes of radix sort with universe size $\sqrt{\log n}$ achieves linear time). Since our ordering has an imprecision factor of at most $1/2$, we will stop verifying only when the frequency drops below $\beta r/2$. This does not change the time complexity. In addition, 
However, the elements can later change their frequency upon updates, but only by a maximum of $\gamma=\alpha b_l/2$ (for large intervals) or $\gamma=\alpha m_l/4$ (for medium-sized intervals), before we rebuild the lists. We do not store symbol frequencies, just their order. This ordering is not modified upon updates, only when the lists are rebuilt. Therefore, we can stop verifying safely only when the frequency we compute on the fly drops below $\beta r-\gamma$, since this guarantees than the next element cannot have a current frequency over $\beta r$. Since $r > b_l$ for large intervals and $r \ge m_l/2$ for medium-sized intervals, it holds that $\beta r - \gamma > \beta b_l/2$ for large intervals and $\beta r - \gamma \ge \beta m_l/4$ for medium-sized intervals, and therefore the resulting complexity is in both cases $O(\frac{\log n}{\beta \log\log n})$.

\subsection{Small Intervals}

The small ranges, which were solved by brute force, pose a more difficult problem, because now we cannot afford scanning a block of $S$ of size $O(\frac{\log n}{\alpha \log\log n})$. To handle small ranges, we add further structures to our tree leaves, which contain $L'$ to $2L'$ elements for $L' = \lceil \frac{1}{\alpha} \lceil \frac{\lg n}{\lg\lg n} \rceil \rceil$. The leaves will be further partitioned into halves repeatedly in $\log(1/\alpha)$ levels, until reaching size between $L^*$ and $2L^*$, for $L^* = \lceil \frac{\lg n}{\lg\lg n} \rceil$.

These additional levels, numbered $-l^*$ for $l = 0,\ldots,\lfloor \lg(L'/L^*) \rfloor$, are organized much as the miniblocks of Section~\ref{sec:medium}. Indeed, our highest level, $-0^*$, is the same level, $-\lceil \lg(L/L') \rceil$, as the deepest one of Section~\ref{sec:medium}. The main difference is that, in the new levels, not only the sizes $m_{l^*}$ are halved as we descend, but also the majority thresholds are doubled: we use the value $\alpha_{l^*} = \alpha \cdot 2^{l^*}$ to define the candidate lists $C(M)$ at level $-l^*$. In our last level, $-l^* = -\lfloor \lg(L'/L^*) \rfloor$, it holds that $\alpha_{l^*} = \Theta(1)$ (precisely, $\alpha_{l^*} > 1/2$) and $m_{l^*} = O(\frac{\log n}{\log\log n})$ (precisely, $L^* \le m_{l^*} \le 2L^*$). 

%The list of candidates $C(M)$ of a miniblock of level $-l^*$ stores all the elements that occurs more than $\alpha_{l^*} m_{l^*}/4$ times in the area $\predop(M) \cdot M \cdot \succop(M)$, of length at most $6m_{l^*}$. Thus there can be at most $24/\alpha_{l^*}$. TOO REPETITIVE

Because, at each level $-l^*$, $C(M)$ can store at most $24/\alpha_{l^*} = 24/(\alpha\, 2^{l^*})$ elements, we do not store together the information of all the miniblocks descending from a leaf block, as done in Section~\ref{sec:medium}. Rather, we stratify it per level $-l^*$. For each leaf block, we have an array of $O(\log\frac{1}{\alpha})$ entries, one per level $-l^*$, to memory areas of miniblocks of that level descending from the leaf block. Within each memory area, the slots are of the same size, as in Section~\ref{sec:medium}. 

We also impose further structure to the linked lists of miniblocks inside each memory area: the list nodes are not anymore linked, but they are the leaves of a B-tree of arity $B$ to $2B$, for $B = \sqrt{\lg n}$. Since the list at level $-l^*$ has $2^{l^*} < \frac{1}{\alpha}+1$ elements, the B-tree is of height $O(\log(1/\alpha) / \log\log n)$. Each B-tree node stores the up to $2\sqrt{\log n}$ subtree sizes (measured in terms of number of positions of $S$ stored in all the subtree leaves) using Lemma~\ref{lem:sum}, which allows routing the search for a given position in $S$ in constant time per B-tree node. To facilitate memory management, we have one memory area for the B-tree nodes and another for the list nodes, so that memory slots are of the same size within each area.

Finally, we use a new arrangement to store the lists $C(M)$ in these miniblocks. Instead of representing the candidate symbols directly, we store one position of $\predop(M) \cdot M \cdot \succop(M)$ where the symbol appears. The actual symbol can then be obtained with an access to $S$ in time $O(\frac{\lg n}{\lg\lg n})$. Further, we sort all the $O(1/\alpha_{l^*})$ positions of the candidates as follows: The primary criterion for the sort is $\lceil \log(1/f) \rceil$, where $f$ is the relative frequency of the element in $\predop(M) \cdot M \cdot \succop(M)$. The secondary criterion, when the first produces ties, is the increasing order of the positions in $\predop(M) \cdot M \cdot \succop(M)$ we use to represent the symbols.

Therefore, the list $C(M)$ is partitioned into $O(\log n)$ {\em chunks} of
symbols with the same quantized frequency, $q_f = \lceil \log(1/f) \rceil$,
and the positions stored are increasing within each chunk. Those chunks are
then represented as the differences between consecutive positions using
$\gamma$-codes \cite{BCW90}, and a difference of zero is used to signal the
end of a chunk. By Jensen's Inequality, the number of bits required to
represent $k$ differences that add up to $m$ is $O(k\log(m/k))$.%
\footnote{Since $\gamma$-codes can only represent positive numbers and we want
to use zero to signal end of chunks, we will always use the code for $x+1$
to represent the number $x$. This adds only $O(k)$ extra bits.}
 Since there are at most $2^{q_f}$ elements with quantized frequency $q_f$, their chunk is represented with $O(2^{q_f}(\log(m)-q_f))$ bits. Adding up to relative frequency $f^* = \alpha_{l^*}/24$ (i.e., the minimum for a candidate stored in $C(M)$), the total space adds up to the order of
$$ \sum_{q_f=0}^{q_f=\log\lceil 24/\alpha_{l^*}\rceil} 2^{q_f}(\log m -q_f)
~~=~~
2\left\lceil\frac{24}{\alpha_{l^*}}\right\rceil \left(\log m - \log \left\lceil\frac{24}{\alpha_{l^*}}\right\rceil+1\right)-3 = O\left(\frac{\log (\alpha_{l^*} m)}{\alpha_{l^*}}\right),$$
and since in our case $m=O(m_{l^*})=O(\frac{\lg n}{\alpha 2^{l^*}\lg\lg n}) =
O(\frac{\lg n}{\alpha_{l^*} \lg\lg n})$, the total space to represent $C(M)$ is $O((1/\alpha_{l^*})\lg\lg n)$.

\begin{lemma}\label{lem:betaspace}
  The data structures described in Section~\ref{sec:beta} occupy $o(n\lg \sigma)$ bits. 
\end{lemma}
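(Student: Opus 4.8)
The plan is to bound separately the contributions of the three kinds of new structure introduced in Section~\ref{sec:beta}: the per-leaf arrays of level pointers, the B-trees used to navigate the miniblocks at levels $-l^*$, and the $\gamma$-coded candidate lists $C(M)$. All of these live under the $O(n/L)$ leaves of $T$, so once we bound the cost \emph{per leaf} by something that is $o(L\lg\sigma)$, summing over leaves gives $o(n\lg\sigma)$ overall.

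First I would handle the leaf's directory array: there are $O(\log\frac{1}{\alpha})$ levels $-l^*$, and for each we store a pointer into the corresponding memory area, each pointer taking $O(\log L) = O(\log n)$ bits, for $O(\log\frac{1}{\alpha}\cdot\log n)$ bits per leaf. Next, for the B-trees: at level $-l^*$ there are $2^{l^*} < \frac{1}{\alpha}+1$ list nodes (the miniblocks), and the B-tree of arity $\sqrt{\log n}$ over them has $O(2^{l^*}/\sqrt{\log n})$ internal nodes of height $O(\log(1/\alpha)/\log\log n)$; each internal node stores up to $2\sqrt{\log n}$ subtree sizes via Lemma~\ref{lem:sum} in $O(\sqrt{\log n}\,\log n)$ bits (with the shared precomputed table charged globally, contributing only $O(n^{\epsilon'})$ bits). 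Summing $\sum_{l^*} O(2^{l^*})$ internal and leaf B-tree nodes over all levels telescopes to $O(1/\alpha)$ nodes per leaf block, so the B-tree overhead per leaf is $O((1/\alpha)\sqrt{\log n}\,\log n)$ bits; plus the per-miniblock fixed fields ($|M|$, $U(M)$, neighbour/parent indices) which add $O(\log n)$ bits each, again $O((1/\alpha)\log n)$ per leaf.

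For the candidate lists, I would simply invoke the displayed computation already carried out just before the lemma: $C(M)$ at level $-l^*$ occupies $O((1/\alpha_{l^*})\lg\lg n)$ bits, where $\alpha_{l^*}=\alpha\,2^{l^*}$. Summing over the $2^{l^*}$ miniblocks at level $-l^*$ gives $O((2^{l^*}/\alpha_{l^*})\lg\lg n)=O((1/\alpha)\lg\lg n)$ bits for that whole level, and over the $O(\log\frac{1}{\alpha})$ levels, $O((1/\alpha)\log\frac{1}{\alpha}\lg\lg n)$ bits per leaf. Adding the three contributions, the total per leaf is $O\!\big((1/\alpha)\sqrt{\log n}\,\log n + (1/\alpha)\log\tfrac{1}{\alpha}\lg\lg n + \log\tfrac{1}{\alpha}\log n\big)$ bits. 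Multiplying by $O(n/L)$ leaves and using $L=\Theta(\frac{1}{\alpha}(\frac{\log n}{\log\log n})^2)$, the dominant term is $O\!\big(\frac{n/\alpha\cdot\sqrt{\log n}\,\log n}{(1/\alpha)(\log n/\log\log n)^2}\big) = O\!\big(\frac{n(\log\log n)^2}{\sqrt{\log n}}\big) = o(n)$, and since $\log\frac{1}{\alpha}=O(\log\sigma)$ (noted at the start of the section) the $\log\frac{1}{\alpha}$ factors are harmless; hence the bound is $o(n\lg\sigma)$.

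The main obstacle I expect is the bookkeeping for the B-tree nodes: one must be careful that the subtree-size arrays fit the hypotheses of Lemma~\ref{lem:sum} (at most $O(\log^\epsilon n)$ entries of $O(\log n)$ bits, with a shared universal table), which holds since the arity $\sqrt{\log n}=\log^{1/2}n$, and that the $O(n^{\epsilon'})$-bit table is counted once rather than per node. A secondary subtlety is making sure the $\gamma$-coded chunks are laid out in fixed-size slots compatible with the memory-management scheme — but since the displayed bound already gives $O((1/\alpha_{l^*})\lg\lg n)$ bits, rounding each slot up to that size costs only a constant factor. Everything else is the same telescoping-over-levels argument used in Lemmas~\ref{lem:largespace} and~\ref{lem:mediumspace}.
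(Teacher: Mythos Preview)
There is a genuine counting mismatch in your analysis. The figure ``$2^{l^*} < \frac{1}{\alpha}+1$ miniblocks at level $-l^*$'' from the text is the count \emph{per block of size $\Theta(L')$} (the further-partitioned leaves mentioned at the start of Section~\ref{sec:beta}), not per leaf of $T$ (which has size $\Theta(L)$). You then multiply by $O(n/L)$ leaves of $T$, so your totals are too small by a factor of $L/L' = \Theta(\frac{\lg n}{\lg\lg n})$. The paper's own accounting is done per leaf of $T$, with $2L/(L'/2^{l^*}) = O(\frac{\lg n}{\lg\lg n}\cdot 2^{l^*})$ miniblocks at level $-l^*$; equivalently, you could keep $2^{l^*}$ per sub-leaf but then multiply by $O(n/L')$.

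Once the count is corrected, your choice of $O(\lg n)$ bits for the per-miniblock fixed fields and for each B-tree entry no longer yields $o(n\lg\sigma)$: the corrected total for these fields becomes $\Theta\big(\frac{n}{L'}\cdot\frac{1}{\alpha}\cdot\lg n\big) = \Theta(n\lg\lg n)$, which is not $o(n\lg\sigma)$ when $\sigma$ is small. The paper avoids this by exploiting that all sizes, counters, and local offsets fit in $O(\lg L)$ bits rather than $O(\lg n)$ bits, and charges $O(B\lg L)$ bits per B-tree internal node (hence $O(\lg L)$ per miniblock after dividing by the branching factor). Since $\lg L = O(\lg\frac{1}{\alpha}+\lg\lg n)$ and $\lg\frac{1}{\alpha}=O(\lg\sigma)$, this is what keeps the whole sum at $O\big(\frac{n\lg\frac{1}{\alpha}(\lg\lg n)^2}{\lg n}\big)=o(n\lg\sigma)$. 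Your analysis of the $\gamma$-coded candidate lists $C(M)$ is fine (it survives the corrected count), and your remarks about Lemma~\ref{lem:sum}'s hypotheses and the shared universal table are correct; the gap is purely in the miniblock count and in the bit-width used for the auxiliary fields.
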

\begin{proof}
The analysis is analogous to that of Lemma~\ref{lem:mediumspace}. The number of miniblocks at level $-l^*$ of each array $A(v)$ is at most $2L/(L'/2^{l^*}) = O(\frac{\lg n}{\lg\lg n}\cdot 2^{l^*+1})$. The size of the miniblocks includes the space to store the list of candidates, $O((1/\alpha_{l^*})\log\log n)$, plus a constant number of $(\log L)$-bit counters and pointers, which require $O(\log\log n + \log\frac{1}{\alpha})$ further bits. The B-tree nodes, stored in another memory area, require $O(B \lg L)$ bits per node, but have $O(1/B)$ nodes per miniblock $M$, thus their space is already covered in our formula. All this adds up to $O((1/\alpha_{l^*})\log\log n + \log\frac{1}{\alpha})$ bits per miniblock, which multiplied by the number of miniblocks at level $-l^*$ of $A(v)$ yields $O(\frac{\log n}{\alpha} + \frac{\lg n \lg\frac{1}{\alpha}}{\lg\lg n}\cdot 2^{l^*})$ bits. 
Summing up this space over all the $\lg\frac{1}{\alpha}$ levels $-l^*$, we obtain $O(\frac{\lg n\lg\frac{1}{\alpha}}{\alpha} + \frac{\lg n \lg\frac{1}{\alpha}}{\alpha\lg\lg n})= O(\frac{\lg n\lg\frac{1}{\alpha}}{\alpha})$ bits. Finally, multiplying this space by the $O(n/L)$ leaves, we obtain $O(\frac{n\lg\frac{1}{\alpha} (\lg\lg n)^2}{\lg n}) = o(n\lg\sigma)$ bits.

We also have $O(\log\frac{1}{\alpha})$ global pointers for the memory areas of each level $-l^*$, which multiplied by the $O(n/L)$ leaves yields $O(\frac{\alpha n\log\sigma (\lg\lg n)^2}{\lg n})=o(n\log\sigma)$ bits in total.
\qed
\end{proof}

Now we show how to support range $\beta$-majority queries with this structure.

\begin{lemma}\label{lem:betaquery}
Small-sized range $\beta$-majority queries, for any $\beta \ge \alpha$, can be supported in $O(\frac{\lg n}{\beta \lg\lg n})$ time.
\end{lemma}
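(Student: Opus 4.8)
The plan is to mimic the query algorithms of Lemmas~\ref{lem:largequery} and \ref{lem:mediumquery}, descending through the new levels $-l^*$ until we reach the smallest level whose miniblock size $m_{l^*}$ still exceeds $r$, and then verify the candidate list $C(M)$ stored there. First I would locate the leaf $v$ of $T$ containing $S[i]$ in $O(\frac{\lg n}{\lg\lg n})$ time as before, together with the starting position of $B(v)$ in $S$. Then I would pick the level $-l^*$ with $m_{l^*}/2 \le r < m_{l^*}$; since $L^* \le m_{l^*}$ at the deepest level and $r \le L'$, such a level exists among the $O(\log\frac1\alpha)$ new levels (and if $r < L^*$ we simply use the deepest level, where the threshold $\alpha_{l^*} > 1/2$ and the brute-force scan over $O(L^*) = O(\frac{\lg n}{\lg\lg n})$ elements is already within budget). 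To find the miniblock $M$ at level $-l^*$ that contains $S[i]$, I would not scan a linked list but descend the B-tree of arity $B=\sqrt{\lg n}$ stored for that level's memory area, using the subtree-size sequences (Lemma~\ref{lem:sum}) to route in $O(1)$ time per node; since the B-tree has height $O(\log(1/\alpha)/\log\log n) = O(1)$ — because $\log\frac1\alpha = O(\log n)$ and $\log B = \Theta(\log\log n)$ — this costs $O(1)$ time. Along the descent I also accumulate the starting position of $M$ in $S$ via $\sumop$ queries.

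Next, exactly as in Lemma~\ref{lem:mediumquery}, because $m_{l^*} > r$ the range $S[i..j]$ is contained either in $\predop(M)\cdot M$ or in $M\cdot\succop(M)$, so every $\beta$-majority of $S[i..j]$ — occurring more than $\beta r \ge \alpha r > \alpha m_{l^*}/2 \ge (\alpha_{l^*}/2^{l^*+1})\cdot m_{l^*}$ times — is an $\alpha_{l^*}$-candidate and hence appears in $C(M)$. So it suffices to decode $C(M)$, recover the candidate symbols, and verify each with two $\rankop$ queries on $S$. The decoding proceeds chunk by chunk in order of increasing quantized frequency $q_f = \lceil\log(1/f)\rceil$: each chunk is a $\gamma$-coded sequence of positive gaps terminated by a zero gap, so I recover the stored positions of $\predop(M)\cdot M\cdot\succop(M)$, map them back to positions of $S$ (using the starting positions of $\predop(M)$, $M$, $\succop(M)$ computed during the descent), access $S$ to obtain the actual symbol in $O(\frac{\lg n}{\lg\lg n})$ time each, and run $\rankop(c,j)-\rankop(c,i-1)$ to test membership in the answer. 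The crucial point — analogous to the large/medium-sized argument — is when to stop: since the sort is primarily by $q_f$, once I reach a chunk whose quantized frequency certifies a relative frequency below $\beta r/(|\predop(M)|+|M|+|\succop(M)|)$ minus the slack $\gamma = \alpha_{l^*} m_{l^*}/4$ introduced by updates since the last rebuild, no later candidate can be a $\beta$-majority, so I halt. As $|\predop(M)\cdot M\cdot\succop(M)| \le 6m_{l^*} = O(\frac{\lg n}{\alpha_{l^*}\lg\lg n})$ and $\beta \ge \alpha \ge \alpha_{l^*}/2^{l^*} \ge \Omega(\alpha_{l^*})$ is false in general — rather $\beta$ can be much larger than $\alpha_{l^*}$ — the number of chunks actually examined, and the total number of candidates decoded and verified, is $O(1/\beta)$, not $O(1/\alpha_{l^*})$. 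Each of the $O(1/\beta)$ decoded candidates costs $O(\frac{\lg n}{\lg\lg n})$ for the symbol access plus the $\rankop$s, giving $O(\frac{\lg n}{\beta\lg\lg n})$ overall; adding the $O(\frac{\lg n}{\lg\lg n})$ for locating $v$ yields the claimed bound.

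The main obstacle is making the "stop early" argument airtight despite two compounded sources of imprecision: the frequencies are quantized to powers of two (so $q_f$ only brackets $f$ within a factor of $2$), and they are stale by up to $\gamma$ absolute occurrences because we rebuild $C(M)$ only every $\alpha_{l^*} m_{l^*}/4$ updates. I must show that the combined effect still lets us terminate after $O(1/\beta)$ candidates: concretely, that a true $\beta$-majority of $S[i..j]$ necessarily lands in a chunk $q_f$ with $2^{q_f} = O(1/\beta)$, so scanning all chunks with index up to that threshold — and one more, to be safe about the quantization boundary and the staleness — exhausts the possible answers while touching only $\sum_{q_f \le \log\Theta(1/\beta)} 2^{q_f} = O(1/\beta)$ stored positions. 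A true $\beta$-majority occurs $> \beta r > \beta m_{l^*}/2$ times in a window of length $\le 6m_{l^*}$, so its relative frequency at rebuild time was at least $\beta/12 - \gamma/(6m_{l^*}) = \beta/12 - \alpha_{l^*}/24 \ge \beta/12 - \beta/24 = \beta/24$ (using $\alpha_{l^*} \le \beta$, which holds because $\beta \ge \alpha \ge \alpha_{l^*}/2^{l^*}$ is the wrong direction — so here I must instead invoke that $\alpha_{l^*} m_{l^*} = \alpha\, 2^{l^*}\cdot L/2^{l^*}\cdot(\ldots)$ is comparable to $\alpha L^* = O(\lg n/\lg\lg n)$ and that $\beta r = \Omega(\beta\, m_{l^*})$ dominates $\gamma$ whenever $\beta \gg \alpha_{l^*}$, and handle the remaining regime $\beta = O(\alpha_{l^*})$ by noting $1/\beta = \Theta(1/\alpha_{l^*})$ there so the full $|C(M)| = O(1/\alpha_{l^*})$ scan is already affordable). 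Thus its quantized frequency satisfies $q_f \le \log(24/\beta) + O(1)$, i.e. $2^{q_f} = O(1/\beta)$, and stopping just past this chunk index is both correct and within the time budget. The rest — B-tree routing correctness, gap-decoding, position remapping — is routine given Lemmas~\ref{lem:string} and \ref{lem:sum}.
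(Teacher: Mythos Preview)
Your overall plan matches the paper's: locate the leaf, choose level $-l^*$ with $m_{l^*}/2 \le r < m_{l^*}$, descend the B-tree to the miniblock $M$, then scan $C(M)$ chunk by chunk with early stopping. The case split $\beta \ge 2\alpha_{l^*}$ versus $\beta = O(\alpha_{l^*})$ for the stopping-time analysis is also what the paper does. However, there is a genuine correctness gap in your argument, plus one arithmetic slip.

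\textbf{The gap.} You never establish that the $\beta$-majorities are in $C(M)$ in the first place. Your chain $\beta r \ge \alpha r \ge \alpha m_{l^*}/2 = (\alpha_{l^*}/2^{l^*+1})\, m_{l^*}$ ends at the wrong threshold: membership in $C(M)$ requires frequency above (roughly) $\alpha_{l^*} m_{l^*}/4$, not $(\alpha_{l^*}/2^{l^*+1})\, m_{l^*}$. You notice this (``the wrong direction'') and try to patch it via a case split, but the patch addresses only \emph{time}, not \emph{correctness}. In the regime $\beta < \alpha_{l^*}$ you argue that a full scan of $C(M)$ is affordable since $|C(M)| = O(1/\alpha_{l^*}) = O(1/\beta)$; yet a $\beta$-majority of $S[i..j]$ can have as few as $\beta r+1 \le \beta m_{l^*}$ occurrences in $\predop(M)\cdot M\cdot\succop(M)$, which for $\beta < \alpha_{l^*}/4$ is below the storage threshold, so it need not be in $C(M)$ at all. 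Scanning all of $C(M)$ then simply misses the answer. The paper's fix is clean and is exactly the step you are groping toward: since $\alpha_{l^*} < \frac{\lg n}{r\,\lg\lg n}$ at the chosen level, one has $\alpha_{l^*} \le \beta$ whenever $r \ge \frac{\lg n}{\beta\,\lg\lg n}$; if instead $r < \frac{\lg n}{\beta\,\lg\lg n}$, extract $S[i..j]$ and run Misra--Gries, which already costs $O\!\left(\frac{\lg n}{\beta\,\lg\lg n}\right)$. In other words, your brute-force threshold must be $\frac{\lg n}{\beta\,\lg\lg n}$ (depending on $\beta$), not $L^*$. Once $\alpha_{l^*} \le \beta$ is guaranteed, both containment in $C(M)$ and your chunk-counting bound $2^{q_f} = O(1/\beta)$ go through, and the case $\alpha_{l^*} \le \beta < 2\alpha_{l^*}$ is handled by the full-scan argument exactly as you suggest.

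\textbf{The slip.} The B-tree height is $O\!\left(\frac{\log(1/\alpha)}{\log\log n}\right)$, and from $\log(1/\alpha) = O(\log n)$ this is $O\!\left(\frac{\lg n}{\lg\lg n}\right)$, not $O(1)$. This does not affect the final bound but should be corrected.
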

\begin{proof}
After we arrive at the corresponding leaf block $v$ in time $O(\frac{\lg n}{\lg\lg n})$ as in the proof of Lemma~\ref{lem:mediumquery}, we choose the level $-l^*$ according to $r=j-i+1$: it must hold that $m_{l^*}/2 \le r < m_{l^*}$, i.e., $\frac{\lg n}{2r \lg\lg n} \le \alpha_{l^*} < \frac{\lg n}{r \lg\lg n}$. This level is appropriate to apply the same reasoning of Lemma~\ref{lem:mediumquery}, and it exists whenever $2L^* \le r < L'$. On the other hand, we need that $\alpha_{l^*} \le \beta$ in order to ensure that the candidates stored in $C(M)$ (which include all the possible $\alpha_{l^*}$-majorities) include all the possible $\beta$-majorities. Since $\alpha_{l^*} < \frac{\lg n}{r \lg\lg n}$, it suffices that $\frac{\lg n}{r \lg\lg n} \le \beta$ to have $\alpha_{l^*} \le \beta$. This condition is equivalent to $r \ge \frac{\lg n}{\beta\lg\lg n}$. We can always assume this condition to be true, since otherwise we can use Lemma~\ref{lem:string} to extract $S[i..j]$ and find its majorities in time $O(\frac{\lg n}{\beta\lg\lg n})$ without the help of any other data structure.

We then traverse the B-tree of level $-l^*$ so as to find the appropriate miniblock $M$. The traversal takes time $O(\frac{\lg(1/\alpha)}{\lg\lg n}) = O(\frac{\lg n}{\lg\lg n})$. Once we arrive at the proper miniblock $M$, we scan the successive chunks of $C(M)$. Since the frequencies in the next chunk (at the moment of list construction) could not be more than those in the current chunk, and since some frequency may have increased by at most $\alpha_{l^*} m_{l^*}/4$ since the last reconstruction, we can safely stop the scan when the highest frequency seen in the current chunk  does not exceed $\beta r - \alpha_{l^*} m_{l^*}/4$. 

Let us analyze the cost we incur to scan up to this threshold. Since frequencies can also decrease by up to $m_{l^*}/4$ until the next reconstruction, an element with current frequency over $\beta r - \alpha_{l^*} m_{l^*}/4$ must have had frequency over $\beta r - \alpha_{l^*} m_{l^*}/2 \ge (\beta-\alpha_{l^*}) m_{l^*}/2$ when the list was built, and thus its relative frequency was $f \ge (\beta-\alpha_{l^*})/12$. Its quantized frequency was therefore $q_f \le \lceil \log(12/(\beta-\alpha_{l^*})) \rceil$, and thus we might have to process up to $2^{q_f+1} = O(\frac{1}{\beta-\alpha_{l^*}})$ elements before covering its chunk. This is $O(1/\beta)$ if $\beta \ge 2\alpha_{l^*}$; otherwise we might use the argument that the whole list is of size $O(1/\alpha_{l^*}) = O(1/\beta)$ anyway. Therefore, we try out each candidate using $\rankop$ on $S$ in time $O(\frac{\lg n}{\beta \lg\lg n})$ and complete the query.
\qed
\end{proof}

Finally, we show that we can still maintain the structure within the original time.

\begin{lemma}\label{lem:betaupdate}
The data structures described in Section~\ref{sec:beta} can be maintained in amortized time $O(\frac{\lg^2(1/\alpha)}{\log\log n}+\frac{1}{\alpha}+\frac{\lg n}{\lg\lg n})$ under update operations.
\end{lemma}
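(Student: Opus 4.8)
The plan is to mirror the update analysis of Lemma~\ref{lem:mediumupdate}, adding the handling of the $O(\log\frac{1}{\alpha})$ new levels $-l^*$ and of the B-trees inside each memory area. As before, I only describe {\insop}; {\delop} is symmetric. First I would perform the top-down traversal of $T$ to locate the leaf $v$ containing $S[i]$ in $O(\frac{\lg n}{\lg\lg n})$ time, and also locate its left and right sibling leaves $v_1,v_2$. Then, for each of the $O(\log\frac{1}{\alpha})$ levels $-l^*$, I would descend the three relevant B-trees (those of $v$, $v_1$, $v_2$) to find the miniblocks $M$ that contain $S[i]$ together with their predecessors and successors; using the subtree-size summaries stored via Lemma~\ref{lem:sum}, each B-tree descent costs $O(1)$ per node and $O(\frac{\lg(1/\alpha)}{\lg\lg n})$ per level, for a total of $O(\frac{\lg^2(1/\alpha)}{\lg\lg n})$ over all levels to find all the miniblocks whose lengths and counters $U$ must be incremented. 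While descending I also increment the recorded subtree sizes in the B-tree nodes, again $O(1)$ per node.

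Next I would account for the candidate-list reconstructions. Exactly as in Lemma~\ref{lem:largeupdate}, a miniblock $M$ at level $-l^*$ has its list $C(M)$ rebuilt only after $\alpha_{l^*} m_{l^*}/4 = \Theta(\frac{\lg n}{\lg\lg n})$ updates touch $M\cdot\predop(M)\cdot\succop(M)$, and each rebuild costs $O(m_{l^*})$ time to run Misra--Gries plus $O((1/\alpha_{l^*})\lg\lg n + \frac{\lg n}{\lg\lg n})$ time to sort the $O(1/\alpha_{l^*})$ candidate positions by quantized frequency and re-encode the $\gamma$-coded chunks (the sort of $O(1/\alpha_{l^*})$ items costs $O(\frac{1}{\alpha_{l^*}}\log\frac{1}{\alpha_{l^*}})$; since $1/\alpha_{l^*}\le m_{l^*}$ this is absorbed, and one should also note $\gamma$-decoding/encoding of a chunk is linear in its bit length, which is $O((1/\alpha_{l^*})\lg\lg n)$ by the space bound established before the lemma). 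Amortizing the rebuild cost over the $\Theta(m_{l^*})$ updates between consecutive rebuilds gives $O(1 + \frac{1/\alpha_{l^*}}{m_{l^*}}\lg\lg n) = O(1)$ amortized per update per affected miniblock (using $m_{l^*}\ge L^* = \lceil \frac{\lg n}{\lg\lg n}\rceil \ge (1/\alpha_{l^*})\lg\lg n$ only up to constants—this is the one inequality I would want to double-check). Since there are $O(\log\frac{1}{\alpha})$ levels and $O(1)$ affected miniblocks per level, this contributes $O(\log\frac{1}{\alpha})$ amortized time; I would fold this into the $O(\frac{\lg^2(1/\alpha)}{\lg\lg n})$ and $O(\frac{1}{\alpha})$ terms of the claimed bound.

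Then I would handle structural changes. When a miniblock $M$ at level $-l^*$ splits, I recompute the candidate lists and counters of the constantly many miniblocks whose neighbourhoods change, in time $O(m_{l^*} + (1/\alpha_{l^*})\lg\lg n)$, and amortize over the $\Theta(m_{l^*})$ insertions forcing the split, for $O(1)$ amortized per level and $O(\log\frac{1}{\alpha})$ total. A split may also force an insertion into the B-tree of that memory area; since the B-tree has height $O(\frac{\lg(1/\alpha)}{\lg\lg n})$ and each node update (including splitting a B-tree node and rebuilding its Lemma~\ref{lem:sum} summary of $O(\sqrt{\lg n})$ values, which costs $O(\sqrt{\lg n})$) is $O(\sqrt{\lg n})$, and B-tree-node splits amortize the usual way, this is $O(\frac{\sqrt{\lg n}\,\lg(1/\alpha)}{\lg\lg n}) = o(\frac{\lg n}{\lg\lg n})$ amortized. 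Finally, an {\insop} may propagate a leaf split up $T$ exactly as in Section~\ref{sec:large}/\ref{sec:medium}; rebuilding the new leaves' full $D(\cdot)$ structure (all $O(\log\frac{1}{\alpha})$ levels plus B-trees) costs $O(L') = O(\frac{\lg n}{\alpha\lg\lg n})$ but is amortized over $\Omega(L)$ insertions, giving $O(\frac{1}{\alpha\lg\lg n}) = O(\frac{1}{\alpha})$ amortized (the higher-level splits cost the $O(\frac{\lg n}{\lg\lg n})$ term as in Lemma~\ref{lem:largeupdate}), and a change in $L'$ or $L$ triggers a global $O(n\lg n)$ rebuild amortized over $\Omega(n)$ updates for $O(\lg n)$ amortized, again dominated. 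Summing all contributions yields the claimed $O(\frac{\lg^2(1/\alpha)}{\lg\lg n} + \frac{1}{\alpha} + \frac{\lg n}{\lg\lg n})$ amortized update time.

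The main obstacle I expect is the bookkeeping for the $\gamma$-coded chunked candidate lists: a single insertion can, in principle, shift one element across a quantized-frequency boundary, but since we do \emph{not} re-sort on every update (only on rebuild) the stored order and the chunk boundaries are frozen between rebuilds, so the only real work per update is incrementing $|M|$ and $U(M)$ and leaving $C(M)$'s bit-encoding untouched until the next scheduled rebuild — I would make this invariant explicit to keep the per-update cost off the $\gamma$-code manipulation. The second delicate point is confirming that every rebuild cost is genuinely $O(m_{l^*})$ after including $\gamma$-decoding/re-encoding and the sort, i.e. that $(1/\alpha_{l^*})\log\frac{1}{\alpha_{l^*}}$ and $(1/\alpha_{l^*})\lg\lg n$ are both $O(m_{l^*}) = O(\frac{\lg n}{\alpha_{l^*}\lg\lg n})$; the second requires $(\lg\lg n)^2 = O(\lg n)$, which holds, and the first is immediate.
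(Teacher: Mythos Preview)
Your approach is essentially the paper's, but there is a concrete arithmetic slip in the amortization of candidate-list rebuilds that you should fix. You correctly state that $C(M)$ at level $-l^*$ is rebuilt after $\alpha_{l^*} m_{l^*}/4 = \Theta(\frac{\lg n}{\lg\lg n})$ updates, yet two lines later you amortize ``over the $\Theta(m_{l^*})$ updates between consecutive rebuilds'' and obtain $O(1)$ per level. The divisor must be $\alpha_{l^*} m_{l^*}$, not $m_{l^*}$, so the dominant part of the amortized rebuild cost at level $-l^*$ is $O\big(m_{l^*}/(\alpha_{l^*} m_{l^*})\big)=O(1/\alpha_{l^*})$, not $O(1)$. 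Summing $\sum_{l^*} 1/\alpha_{l^*}=\sum_{l^*} 1/(\alpha\,2^{l^*})=O(1/\alpha)$ is precisely the source of the $O(1/\alpha)$ term in the lemma; your $O(\log\frac{1}{\alpha})$ is an underestimate and your later attribution of the $O(1/\alpha)$ term to leaf-block splits is not where it actually comes from.

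Everything else in your outline (the $O(\frac{\lg^2(1/\alpha)}{\lg\lg n})$ cost from walking the B-trees across all $\log\frac{1}{\alpha}$ levels, the $O(\frac{\lg n}{\lg\lg n})$ to reach the leaf, the amortized $O(1)$ for B-tree node splits of cost $O(B)$, and the observation that the $\gamma$-encoded lists are frozen between rebuilds) matches the paper's argument. Once you correct the divisor, the paragraph on rebuild costs collapses to the paper's one-line claim that the amortized reconstruction cost at level $-l^*$ is $O(1/\alpha_{l^*})$, summing to $O(1/\alpha)$.
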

\begin{proof}
For large and medium blocks, the only difference is that we must sort the candidate lists $C(v)$ and $C(M)$ by decreasing frequency. This is not difficult because we already spend time $O(b_l)$ (for large ranges) and $O(m_l)$ (for medium-sized ranges) in building the lists. The frequencies range over a universe of the same size, thus we can sort them within the same times, $O(b_l)$ or $O(m_l)$, using radix sort.

The maintenance procedure for the levels $-l^*$ is very similar to that of miniblocks described in Lemma~\ref{lem:mediumupdate}. One difference is that, when changes in a miniblock $M$ occurs, we must update its size upwards in its B-tree. This adds $O(\frac{\log(1/\alpha)}{\log\log n})$ time, because Lemma~\ref{lem:sum} allows us update each B-tree node in constant time. Node splits and merges require $O(B)$ time, but these amortize to $O(1)$. Finally, a single update requires modifying the B-trees in all the $\log(1/\alpha)$ levels $-l^*$, for a total update cost of $O(\frac{\log^2(1/\alpha)}{\log\log n})$.

The amortized cost to reconstruct a list $C(M)$ at level $-l^*$ is $O(1/\alpha_{l^*})$, including the special sorting and encoding we use. Since a single update is reflected in all the levels, we must add up this cost for all $-l^*$, yielding $O(1/\alpha)$. Updates also need time $O(\frac{\log n}{\lg\lg n})$ to reach the desired miniblock.
\qed
\end{proof}

We now have all the elements to prove our main result. Note that $O(\frac{\lg n}{\alpha})$ encompasses all the update costs for the three range sizes.

\begin{theorem} \label{thm:beta}
  For any $0<\alpha<1$, a sequence of length $n$ over an alphabet of size $\sigma$ can be represented using $nH_k+ o(n\lg \sigma)$ bits for any $k = o(\log_{\sigma} n)$ to answer range $\beta$-majority queries for any $\beta \ge \alpha$ in $O(\frac{\lg n}{\beta \lg\lg n})$ time, and to support symbol insertions and deletions in $O(\frac{\lg n}{\alpha})$ amortized time.
\end{theorem}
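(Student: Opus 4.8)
The plan is to assemble Theorem~\ref{thm:beta} from the pieces already in place, since the substantive work lies in the preceding lemmas. I would store $S$ with the structure of Lemma~\ref{lem:string}, keep on top of it the weight-balanced B-tree $T$ with its candidate lists and marked-level accelerators from Section~\ref{sec:large}, the per-leaf miniblock decompositions $D(v)$ from Section~\ref{sec:medium}, and the stratified level-$(-l^*)$ structures with their B-trees and $\gamma$-coded sorted candidate lists from Section~\ref{sec:beta}. I would also invoke the reductions at the start of Section~\ref{sec:beta}: replace $\alpha$ by $\max(\alpha,1/\sigma)$ and assume $\alpha \le 1/2$, so that $\lg(1/\alpha) = O(\lg\sigma)\cap O(\lg n)$, a fact the space and update analyses rely on; and answer any query with $\beta = O(1/\sigma)$ by scanning all $\sigma$ symbols with $\rankop$ in $O(\sigma\lg n/\lg\lg n) = O(\lg n/(\beta\lg\lg n))$ time.

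The space bound is immediate by addition: $nH_k + o(n\lg\sigma)$ from Lemma~\ref{lem:string}, and $o(n\lg\sigma)$ from each of Lemmas~\ref{lem:largespace}, \ref{lem:mediumspace} and \ref{lem:betaspace}, for a total of $nH_k + o(n\lg\sigma)$ bits. For a range $\beta$-majority query with $\beta \ge \alpha$ on $[i..j]$, let $r = j-i+1$ and dispatch on $r$. If $r \ge L$ the query is large-sized and is answered by the sorted-candidate-list refinement of Lemma~\ref{lem:largequery} given in Section~\ref{sec:beta}; if $L' < r < L$ it is medium-sized and the analogous refinement of Lemma~\ref{lem:mediumquery} applies; if $r \le L'$ it is small-sized and Lemma~\ref{lem:betaquery} applies, which itself falls back to extracting $S[i..j]$ via Lemma~\ref{lem:string} and running the algorithm of Misra and Gries~\cite{mg1982} when $r$ is too small for a suitable level $-l^*$ to exist or when $r < \lg n/(\beta\lg\lg n)$. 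These three cases are exhaustive and each costs $O(\lg n/(\beta\lg\lg n))$.

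For updates I would charge each {\insop} or {\delop} with the sum of the costs incurred in all four structures: $O(\lg n/\lg\lg n)$ for Lemma~\ref{lem:string}; $O(\lg n/\alpha)$ amortized from Lemma~\ref{lem:largeupdate}; $O(\lg n/\lg\lg n + \lg\lg n/\alpha)$ amortized from Lemma~\ref{lem:mediumupdate}; and $O(\lg^2(1/\alpha)/\lg\lg n + 1/\alpha + \lg n/\lg\lg n)$ amortized from Lemma~\ref{lem:betaupdate}. I would then check that every term is $O(\lg n/\alpha)$: this is clear for $O(\lg n/\lg\lg n)$, $O(\lg\lg n/\alpha)$ and $O(1/\alpha)$, and for $O(\lg^2(1/\alpha)/\lg\lg n)$ it follows from $\lg(1/\alpha) = O(\lg n)$ --- when $1/\alpha$ is large, $\lg^2(1/\alpha) = O(1/\alpha)$, and when $1/\alpha = O(1)$, $\lg^2(1/\alpha) = O(1) = O(\lg n)$. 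Hence each update costs $O(\lg n/\alpha)$ amortized, as claimed.

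The only real obstacle is bookkeeping rather than mathematics: checking that the range-size case split is genuinely exhaustive (in particular that ranges below $\lg n/(\beta\lg\lg n)$, and those too small to admit a level $-l^*$, are caught by direct extraction), that the structures of the three sections coexist without perturbing one another's bounds, and that the update costs collapse to $O(\lg n/\alpha)$ under the standing assumption $\lg(1/\alpha) = O(\lg n)$. All of this is an immediate consequence of the lemmas already established, so no new idea is needed at this point.
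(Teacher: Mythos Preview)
Your proposal is correct and follows exactly the paper's approach: the paper's own proof of Theorem~\ref{thm:beta} is the single sentence ``Note that $O(\frac{\lg n}{\alpha})$ encompasses all the update costs for the three range sizes,'' i.e., it simply collects Lemmas~\ref{lem:string}, \ref{lem:largespace}--\ref{lem:mediumupdate}, and \ref{lem:betaspace}--\ref{lem:betaupdate} together with the sorted-list refinements and the $\alpha \ge 1/\sigma$ reduction described at the start of Section~\ref{sec:beta}. Your write-up is in fact more explicit than the paper in verifying that every update term is $O(\lg n/\alpha)$ and that the range-size case split is exhaustive.
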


\subsection{A Static Variant}

Since none of the current static data structures supporting range 
$\alpha$-majority queries reaches the high-order entropy space we obtain in 
Theorem~\ref{thm:beta}, we now consider a static version of our data structure.

A static variant of our solutions uses blocks and miniblocks of fixed size, so 
one can access in constant time the desired block at the corresponding level 
$l$, $−l$, or $-l^*$, and then try out the prefix of $O(1/\beta)$ stored 
candidates that covers all the possible $\beta$-majorities. All that 
precomputed data amounts to $o(n\lg\sigma)$ bits of space, even when built for 
the minimum $\alpha$ of interest, $\max(1/n,1/\sigma)$, as shown in 
Lemmas~\ref{lem:largespace}, \ref{lem:mediumspace}, and \ref{lem:betaspace}. 
Using a sequence representation that uses $nH_k + o(n\lg\sigma)$ bits 
\cite[Thm. 11]{BN15} and answers access queries in time $O(1)$ and rank 
queries in time $O(\lg\lg_w\sigma)$, where $w = \Omega(\lg n)$ is the RAM word 
size in bits, we can solve $\beta$-majority queries in time 
$O((1/\beta) \lg\lg_w\sigma)$.

Since the structure has $O(\lg n)$ levels and each level is built in linear 
time as described in Lemmas~\ref{lem:largeupdate}, \ref{lem:mediumupdate}, and
\ref{lem:betaupdate}, the construction time is $O(n\lg n)$. The sequence
representation we use \cite[Thm. 11]{BN15} is built in linear time.

Interestingly, since update times are irrelevant in this case, the asymptotic
time and space complexities of our data structure do not depend on $\alpha$. 
Thus, our structure can be built directly for the minimum relevant value of 
$\alpha$, $\max(1/n,1/\sigma)$, and then it can be queried for any value of 
$\beta$ (if $\beta \le \max(1/n,1/\sigma)$, we just try out all the $\sigma$ 
possible candidates using $\rankop$ on $S[l..r]$).
Thus, this data structure can be used to answer range $\alpha$-majorities for variable $\alpha$, which is even more powerful than the range $\beta$-majority query. 
The following theorem presents our result, which is stated as a solution to the range $\alpha$-majority problem for variable $\alpha$. 

\begin{theorem}
On a RAM machine of $w = \Omega(\lg n)$ bits, a sequence of length n over an 
alphabet of size $\sigma$ can be represented using $nH_k + o(n\lg\sigma)$ bits 
for any $k = o(\lg_\sigma n)$ to answer range $\alpha$-majority queries for any 
$0 < \alpha < 1$ defined at query time, in $O((1/\alpha) \lg\lg_w\sigma)$
time. The structure is built in $O(n\lg n)$ time.
\end{theorem}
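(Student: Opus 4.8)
The plan is to assemble the static data structure from the three families of block structures already developed in Sections~\ref{sec:large}, \ref{sec:medium}, and \ref{sec:beta}, built for the minimum relevant threshold $\alpha_0 = \max(1/n,1/\sigma)$, and to replace the dynamic building blocks (the weight-balanced B-tree $T$, the dynamic string of Lemma~\ref{lem:string}, and the dynamic arrays of miniblocks) by static analogues of fixed size. First I would observe that, when $\alpha$ is fixed at $\alpha_0$ and updates are abandoned, all the parameters $L$, $L'$, $L^*$, $b_l$, $m_l$, $m_{l^*}$, $\alpha_{l^*}$ are constants (functions of $n$ and $\sigma$ only), so each of the $O(\lg n)$ levels $l$, each of the $O(\lg\lg n)$ levels $-l$, and each of the $O(\lg(1/\alpha_0)) = O(\lg n)$ levels $-l^*$ partitions $S$ into blocks of a fixed size. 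A block containing a given position of $S$ can therefore be located by a single integer division, in $O(1)$ time, with no tree traversal needed; I would drop $T$, the marked-level machinery, the linked lists, and the B-trees of Section~\ref{sec:beta} entirely, keeping only the per-block candidate lists $C(v)$, $C(M)$, sorted by (quantized) decreasing frequency exactly as in Section~\ref{sec:beta}.

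Next I would argue space. The candidate-list structures are unchanged, so Lemmas~\ref{lem:largespace}, \ref{lem:mediumspace}, and \ref{lem:betaspace} apply verbatim and show that all of them together occupy $o(n\lg\sigma)$ bits, even when instantiated at $\alpha_0$ (the analyses there already hold for arbitrary $\alpha$, and the worst case $\alpha = \alpha_0$ only makes the $\lg(1/\alpha)$ factors $O(\lg n)$, which is absorbed). I would then plug in the static sequence representation of \cite[Thm.~11]{BN15}, which uses $nH_k + o(n\lg\sigma)$ bits for any $k = o(\lg_\sigma n)$ and supports $\accop$ in $O(1)$ time and $\rankop$ in $O(\lg\lg_w\sigma)$ time; adding $o(n\lg\sigma)$ to $nH_k + o(n\lg\sigma)$ gives the claimed $nH_k + o(n\lg\sigma)$ total.

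For the query, given $[i..j]$ and $\alpha$ defined at query time, I would case on $r = j-i+1$ exactly as before. If $\alpha \le \alpha_0$ (or the range is tiny), extract $S[i..j]$ via $O(r)$ accesses and run Misra--Gries, which is $O(r) = O(1/\alpha)$, or just test all $\sigma$ symbols with $\rankop$. Otherwise, locate the relevant block at the appropriate level by integer division in $O(1)$ time, then scan the prefix of its sorted candidate list, stopping when the on-the-fly computed frequency (via two $\rankop$ calls, now costing $O(\lg\lg_w\sigma)$ each) drops below the threshold — identical to the argument in Lemmas~\ref{lem:largequery}, \ref{lem:mediumquery}, and \ref{lem:betaquery}, except that since the structure is static there is no slack $\gamma$ and the stopping threshold is exactly $\alpha r$ (or $\beta r$). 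By those same arguments only $O(1/\alpha)$ candidates are examined, each costing $O(\lg\lg_w\sigma)$, for $O((1/\alpha)\lg\lg_w\sigma)$ total. Phrasing it for variable $\alpha$ only requires noting, as the text does, that the structure built for $\alpha_0$ contains a superset of every possible $\alpha$-majority candidate for every $\alpha > \alpha_0$.

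Finally, for construction time I would note that each of the $O(\lg n)$ levels can be built in $O(n)$ time — Misra--Gries on each block gives its candidate list in time linear in the block size, summing to $O(n)$ per level, and the frequency-sorting and $\gamma$-encoding of Section~\ref{sec:beta} also run in linear time via radix sort as in Lemma~\ref{lem:betaupdate} — so the block structures cost $O(n\lg n)$; the sequence representation of \cite[Thm.~11]{BN15} is built in linear time; hence $O(n\lg n)$ overall. The main obstacle, and the only place genuine care is needed, is checking that for \emph{every} query value of $\alpha > \alpha_0$ there is a level whose fixed block size $m$ satisfies the sandwiching relation ($m/2 \le r < m$, or the analogue with $\alpha_{l^*}$) that makes the candidate list provably complete, and that the chunk-scanning stopping bound still yields an $O(1/\alpha)$ rather than $O(1/\alpha_0)$ count of examined candidates — i.e., re-deriving the inequalities of Lemma~\ref{lem:betaquery} with $\gamma = 0$ and confirming the $\beta \ge 2\alpha_{l^*}$ versus $\beta < 2\alpha_{l^*}$ dichotomy still closes the bound; everything else is a routine specialization of the dynamic construction.
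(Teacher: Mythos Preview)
Your proposal is correct and follows essentially the same approach as the paper: fix the block sizes once $\alpha_0=\max(1/n,1/\sigma)$ is chosen, locate the relevant block at the right level in $O(1)$ time, replace the dynamic sequence of Lemma~\ref{lem:string} by the static representation of \cite[Thm.~11]{BN15}, invoke Lemmas~\ref{lem:largespace}, \ref{lem:mediumspace}, \ref{lem:betaspace} for the $o(n\lg\sigma)$ auxiliary space, and build level by level in $O(n\lg n)$ time. Your additional remarks about dropping the slack $\gamma$ in the stopping threshold and rechecking the $O(1/\alpha)$ candidate count are exactly the verifications the paper leaves implicit.
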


%\begin{theorem}
%On a RAM machine of $w = \Omega(\lg n)$ bits, a sequence of length n over an 
%alphabet of size $\sigma$ can be represented using $nH_k + o(n\lg\sigma)$ bits 
%for any $k = o(\lg_\sigma n)$ to answer range $\beta$-majority queries for any 
%$0 < \beta < 1$ defined at query time, in $O((1/\beta) \lg\lg_w\sigma)$
%time. The structure is built in $O(n\lg n)$ time.
%\end{theorem}

\section{Finding $\alpha$-Minorities}

We now introduce the first dynamic structure to find $\alpha$-minorities
in array ranges. We build on the idea of Chan~\etal~\cite{cdsw2015}, who find 
$A=1+\lfloor 1/\alpha\rfloor$ distinct elements in $S[l..r]$ and try them out
one by one, since one of those must be a minority (there may be no minority if
there are less than $A$ distinct elements in $S[l..r]$). A succinct static 
structure based on this idea \cite{BGN13} uses an $O(n)$-bit range minimum 
query data structure \cite{FH11}, of which no dynamic succinct version exists.

We use a different dynamic arrangement that can be implemented in succinct 
space. We partition $S$ into {\em pieces}, which contain $A$ to $3A$ {\em 
distinct} elements, except when $S$ contains a single piece with less than $A$
distinct elements. The following property is the key to find an 
$\alpha$-minority in time $O(\frac{\log n}{\alpha \log\log n})$.

\begin{lemma}
If $S[l..r]$ overlaps one or two pieces only and it has an $\alpha$-minority, 
then this minority element is one of the distinct elements in those pieces. If $S[l..r]$ contains
a piece that is not the last one, then one of the distinct elements in that
contained piece is a minority in $S[l..r]$.
\end{lemma}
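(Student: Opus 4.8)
The plan is to prove the two assertions separately, both by simple counting arguments on the number of distinct elements. Set $A = 1 + \lfloor 1/\alpha \rfloor$, so that any range $S[l..r]$ with at least $A$ distinct elements must contain an $\alpha$-minority (if every distinct element occurred more than $\alpha(r-l+1)$ times, then summing over $\ge A > 1/\alpha$ distinct elements would force more than $r-l+1$ positions, a contradiction). Conversely, if $S[l..r]$ has an $\alpha$-minority at all, that minority is in particular a distinct element occurring in $S[l..r]$.

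\emph{First assertion.} Suppose $S[l..r]$ overlaps one or two pieces only and has an $\alpha$-minority $c$. By definition, $c$ occurs at least once in $S[l..r]$, so $c$ occurs at least once in the union of the (one or two) pieces that $S[l..r]$ overlaps; hence $c$ is one of the distinct elements appearing in those pieces. This direction is essentially immediate from the definition of "overlaps".

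\emph{Second assertion.} Suppose $S[l..r]$ fully contains a piece $P$ that is not the last piece. By construction, every piece except possibly a lone final piece has at least $A$ distinct elements, so $P$ has $\ge A$ distinct elements, all of which occur within $S[l..r]$ since $P \subseteq S[l..r]$. I now want to exhibit one distinct element of $P$ that occurs at most $\alpha(r-l+1)$ times in $S[l..r]$. The key observation is that $|P| \le r - l + 1$, so if all $\ge A$ distinct elements of $P$ occurred more than $\alpha(r-l+1)$ times in $S[l..r]$, the total count would exceed $A \cdot \alpha(r-l+1) > (1/\alpha)\cdot\alpha(r-l+1) = r-l+1 \ge |S[l..r]|$, which is impossible. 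Hence some distinct element $d$ of $P$ occurs at most $\alpha(r-l+1)$ times in $S[l..r]$; since $d$ occurs in $P$ it occurs at least once, so $d$ is an $\alpha$-minority of $S[l..r]$. Note this uses $A > 1/\alpha$, which holds because $A = 1+\lfloor 1/\alpha\rfloor \ge 1 + (1/\alpha - 1) = 1/\alpha$ with strict inequality unless $1/\alpha$ is an integer, in which case $A = 1/\alpha + 1 > 1/\alpha$; either way $A \cdot \alpha > 1$.

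The main thing to be careful about, rather than a genuine obstacle, is the boundary bookkeeping: making sure the counting uses $A$ \emph{strictly} greater than $1/\alpha$ (so that the contained-piece argument yields a \emph{strict} count excess and hence a valid minority), and correctly handling the degenerate case where $S$ consists of a single piece with fewer than $A$ distinct elements — in that case the hypothesis of the second assertion is never triggered (a lone piece is "the last one"), and the first assertion still holds verbatim. Everything else is a one-line pigeonhole count.
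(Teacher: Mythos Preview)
Your proof is correct and follows essentially the same approach as the paper's: the first assertion is immediate since any element of $S[l..r]$ lies in one of the overlapped pieces, and the second is the pigeonhole count that $A$ distinct elements cannot all occur more than $\alpha(r-l+1)$ times in a range of length $r-l+1$. You supply more detail (the strict inequality $A\alpha>1$, the degenerate single-piece case) than the paper's two-sentence proof, but the underlying argument is identical.
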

\begin{proof}
If $S[l..r]$ overlaps one or two pieces only, then all of its distinct elements
are also distinct elements in some of those overlapped pieces, so the result 
holds. If $S[l..r]$ contains a piece with $A$ distinct elements, then one of
those must be an $\alpha$-minority of $S[l..r]$, since not all of them can 
occur more than $\alpha \cdot (j-i+1)$ times in $S[l..r]$.
\qed
\end{proof}

Our data structure is formed by a compressed dynamic representation of $S$
using $nH_k+o(n\log\sigma)$ bits (Lemma~\ref{lem:string}) plus two dynamic
bitvectors that add $2n+o(n)$ bits:
\begin{enumerate}
\item $P[1..n]$, where $P[i]=1$ iff a new piece starts at $S[i]$.
\item $C[1..n]$, where each distinct element in each piece has one arbitrary
occurrence at position $j$ (within the piece) marked with $C[j]=1$.
\end{enumerate}
The dynamic bitvectors support the operations $\accop$, $\rankop$, $\selop$,
$\insop$, and $\delop$, in time $O(\log n /\log\log n)$ (see 
\cite[Lem.~8.1]{NS14} or \cite{HM10}).

To find an $\alpha$-minority in $S[l..r]$, we use $\rankop$ and $\selop$ on
$P$ to determine the first and the last piece overlapped by $[l..r]$.
More precisely, we compute the starting position of the first of these pieces as $x=\selop_1(P,\rankop_1(P,l))$ and the ending position of the last as $y=\selop_1(P,\rankop_1(P,r)+1)-1$.
If there
are one or two pieces overlapped by $[l..r]$, that is, $\rankop_1(P,y)-\rankop_1(P,x-1) \le 2$, we 
try out all their at most $6A$ distinct elements as follows:
For $k=1,2,\ldots$, we find their $k$-th distinct element, $c$, in $S[l..r]$ using the formula $c = S[p]$ for $p=\selop_1(C,\rankop_1(C,x-1)+k)]$.
We then compute $\rankop_c(S,r)-\rankop_c(S,l-1)$ to count how many times $c$ occurs in 
$S[l..r]$ and thus determine whether $c$ is an $\alpha$-minority.
We repeat this process until we find and return an $\alpha$-minority, or until
$p > y$, in which case we report that there is no $\alpha$-majority in the query range. 
%, using $\rankop$ and $\selop$ on 
%$C$ to find them ($c = S[p]$ for $p=\selop_1(C,\rankop_1(C,x-1)+k)]$, 
%$k=1,2,\ldots$, stopping when $p>y$), 
%and $\rankop_c(S,r)-\rankop_c(S,l-1)$ to count how many times they occur in 
%$S[l..r]$.
If $S[l..r]$ overlaps 3 pieces or more, we choose its leftmost 
contained piece (the left and right endpoints of this piece are $\selop_1(P,\rankop_1(P,l-1)+1)$ and 
$\selop_1(P,\rankop_1(P,l-1)+2)-1$, respectively),
and do as before to obtain its $A$ to $3A$ candidates
and count their occurrences in $S$. This process yields, in time
$O(\frac{\log n}{\alpha \log\log n})$, an $\alpha$-minority of $S[l..r]$, if
there is one.

\subsection{Handling Updates}

To insert a new symbol $c$ at position $i$ in $S$, we first do the insertion in
$S$, and also insert a $0$ in $P[i]$ and $C[i]$. We then find the piece 
$P[x..y]$ where $P[i]$ belongs, using $x = \selop_1(P,\rankop_1(P,i))$ and 
$y = \selop_1(\rankop_1(P,i)+1)-1$. Finally, if $\rankop_c(S,y)-\rankop_c(S,x-1)
=1$, then $c$ is a new distinct symbol in the piece and we must mark it, with
$C[i] \leftarrow 1$.

This completes the insertion process unless we exceed the maximum number of 
distinct element in the piece, that is, $\rankop_1(C,y)-\rankop_1(C,x-1)>3A$.
In this case, we {\em repartition} the piece into pieces of size $A$ to $3A$.

The repartitioning proceeds as follows. We locate the first occurrences of
the distinct elements in the piece, using $\rankop$ and $\selop$ on $C$ and $S$.
We unmark (i.e., set to $0$) their piece positions in $C$ (which may not be
their first occurrences).
We then use the classic algorithm that computes order statistics in linear time
(i.e., $O(A)$) to find the $(A+1)$th of those first occurrence positions, $p$. 
Then the first piece, with $A$ distinct
elements, goes from $x$ to $p-1$, so we set $P[p] \leftarrow 1$ and mark in
$C$ the $A$ positions we had located. We then use $\rankop$ and
$\selop$ on $S$ to find the first occurrence of those $A$ positions in
$S[p..y]$, continue with the second piece, and so on.

This process generates a number of pieces with $A$ distinct elements, except
the last one, which may have fewer. In this case, we merge the last two pieces
built into one, which will have less than $2A$ distinct elements. Those are
found by repeating the generation of the penultimate piece, this time not
stopping at the $(A+1)$th smallest position, but rather including all the 
distinct elements.
Overall, each new piece is built in time $O(\frac{\log n}{\alpha \log\log n})$.

To delete $c=S[i]$, we remove the position $i$ from sequences $S$, $P$, and $C$.
If it holds that $P[i]=1$ before removing it, then we had deleted the mark of
the beginning of the piece, so we reset $P[i] \leftarrow 1$
again after removing $P[i]$.
If it holds that $C[i]=1$ before removing it, we must see if there is another
occurrence of $c$ in its piece. We compute the piece endpoints $x$ and $y$
as for the insertion, and then see if $\rankop_c(S,y)-\rankop_c(S,x-1) \ge 1$. 
If so, we set another occurrence of $c$ in $C$, for example,
$C[\selop_c(S,\rankop_c(S,x-1)+1)] \leftarrow 1$. Otherwise, we have lost a 
distinct element in the piece and must see if we still have sufficiently many
distinct elements, that is, if $\rankop_1(C,y)-\rankop_1(C,x-1) \ge A$. If this is true, we finish.

Otherwise, we have less than $A$ distinct elements in the piece, so we 
merge it with the previous or next piece (if none exists, then $S$ has only
one piece, which can have less than $A$ distinct elements). 
The merged piece has at least 
$A$ distinct elements, but it might have up to $4A-1$ and thus overflow. The 
merging then consists of removing the intermediate $1$ in $P$ that separates 
the two pieces and running our repartitioning process described above. The
cost will be, again, $O(\frac{\log n}{\alpha \log\log n})$ per piece generated.

\subsection{Analysis}

Our partitioning process may require time proportional to the length of the
piece, which can be arbitrarily longer than $3A$. Consider for example $A=2$
and the piece $(abc)^n def$. Inserting a $g$ at the end produces 
$\frac{3}{2}n+2$ pieces of length $2$. We can show, however, that the amortized
cost of a repartitioning is within the desired time bounds. We will measure 
the cost in terms of number of operations over the sequences, each of which costs 
$O(\frac{\log n}{\log\log n})$.

Let us define a potential function $\phi = n - A \cdot (m-1)$, where $m$ is the
number of pieces at the present moment. It always holds $\phi \ge 0$, even when
we start with $n=1$ element and $m=1$ piece. Then an insertion or a deletion 
without overflow or underflow modifies $\phi$ by $\Delta\phi = \pm 1$, which 
does not affect
the asymptotic cost. A repartitioning, instead, takes a piece and produces $t>1$
pieces out of it. The actual cost of generating each piece, ignoring constants,
is $A$ operations on the sequences;
therefore the total cost of the operation is $A \cdot t$. On the other hand,
the difference in potential is $\Delta\phi = A \cdot (1-t)$ ($n$ does not 
change while repartitioning). Therefore, the amortized cost of the 
repartitioning is $A$.

The case of an underflow is similar: we first join two pieces, which increases
$\phi$ by $A$. We then repartition the resulting piece into $t$, which costs
$A \cdot t$ and changes the potential by $\Delta\phi = A \cdot (1-t)$. In 
total, the amortized cost of an underflow is $2A$.

Since $A=O(1/\alpha)$ and the operations we are counting cost 
$O(\frac{\log n}{\log\log n})$, the amortized cost of the operations $\insop$ 
and $\delop$ is $O(\frac{\log n}{\alpha \log\log n})$.

\begin{theorem} \label{thm:minority}
  For any $0<\alpha<1$, a sequence of length $n$ over an alphabet of size $\sigma$ can be represented using $nH_k+ 2n + o(n\lg \sigma)$ bits for any $k = o(\log_{\sigma} n)$ to answer range $\alpha$-minority queries in $O(\frac{\lg n}{\alpha \lg\lg n})$ time, and to support symbol insertions and deletions in $O(\frac{\lg n}{\alpha \log\log n})$ amortized time.
\end{theorem}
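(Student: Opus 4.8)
The plan is to assemble Theorem~\ref{thm:minority} from the three components already built in this section: the compressed dynamic sequence of Lemma~\ref{lem:string}, the two dynamic bitvectors $P$ and $C$, and the query/update procedures described above together with the amortized analysis via the potential function $\phi$. First I would account for the space. The sequence $S$ is stored in $nH_k + o(n\lg\sigma)$ bits by Lemma~\ref{lem:string}, valid for any $k = o(\log_\sigma n)$. The bitvectors $P[1..n]$ and $C[1..n]$ are each of length $n$ and, using the dynamic bitvector representation cited (\cite[Lem.~8.1]{NS14} or \cite{HM10}), occupy $n + o(n)$ bits each, for a total additive term of $2n + o(n)$ bits. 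Summing, the total is $nH_k + 2n + o(n\lg\sigma)$ bits, matching the claim. I would note that the $o(n)$ terms from the bitvectors are absorbed into $o(n\lg\sigma)$ since $\sigma \ge 2$.

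Next I would verify the query time. The query algorithm performs a constant number of $\rankop$/$\selop$ calls on $P$ (to find the first and last overlapped pieces and, in the three-or-more-pieces case, the endpoints of the leftmost contained piece), each costing $O(\log n/\log\log n)$; then for $k = 1, 2, \ldots$ it enumerates at most $6A$ candidate distinct elements (in the one- or two-piece case) or the $A$ to $3A$ candidates of a fully contained piece (in the $\ge 3$-piece case), each requiring a constant number of $\rankop$/$\selop$ operations on $C$ and $S$. Since $A = 1 + \lfloor 1/\alpha\rfloor = O(1/\alpha)$ and each sequence operation costs $O(\log n/\log\log n)$, the total is $O\!\left(\frac{1}{\alpha}\cdot\frac{\log n}{\log\log n}\right) = O\!\left(\frac{\log n}{\alpha\log\log n}\right)$. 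Correctness of the reported answer is exactly the content of the preceding Lemma together with the case analysis: if $[l..r]$ overlaps one or two pieces, every distinct element of $S[l..r]$ is a distinct element of one of those pieces, so scanning all their marked representatives is exhaustive; if $[l..r]$ contains a piece other than the last, that piece has $A$ distinct elements, not all of which can occur more than $\alpha(j-i+1)$ times, so one of them is an $\alpha$-minority; and if no candidate passes the frequency test in the one/two-piece case, then $S[l..r]$ genuinely has no $\alpha$-minority.

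For the update time I would invoke the amortized analysis already spelled out with the potential $\phi = n - A(m-1) \ge 0$. An insertion or deletion that does not overflow or underflow a piece does a constant number of sequence operations and changes $\phi$ by $\pm 1$, giving amortized cost $O(\log n/\log\log n)$. A repartitioning of one piece into $t > 1$ pieces has actual cost $O(A\cdot t)$ sequence operations but $\Delta\phi = A(1-t)$, so the amortized cost is $O(A)$ sequence operations; an underflow first merges two pieces ($\Delta\phi = +A$) and then repartitions, for amortized cost $O(A)$ sequence operations as well. Multiplying by the $O(\log n/\log\log n)$ cost per sequence operation and using $A = O(1/\alpha)$, both $\insop$ and $\delop$ run in $O\!\left(\frac{\log n}{\alpha\log\log n}\right)$ amortized time. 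I would also remark that, as in the earlier sections, when $n$ changes enough that $A$ would change (which requires $\Omega(n)$ updates since $A$ depends only on $\alpha$, not $n$ — in fact $A$ is fixed, so this issue does not even arise here, unlike with $L$), no global rebuild is needed; $A = 1 + \lfloor 1/\alpha\rfloor$ is constant throughout.

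The main obstacle — though it is more a matter of care than of depth — is confirming that the repartitioning procedure always restores the invariant of $A$ to $3A$ distinct elements per piece without generating a degenerate final piece: the procedure generates pieces with exactly $A$ distinct elements and then, if the leftover piece has fewer than $A$, merges it with the penultimate one, which can then have as many as $2A - 1 < 3A$ distinct elements, so the invariant is preserved. One must also check that after an insertion the overflow bound $> 3A$ is the right trigger and that after a deletion the underflow bound $< A$ correctly forces a merge whose result has between $A$ and $4A - 1$ distinct elements, safely handled by one more repartition. Finally, I would note the boundary case where $S$ consists of a single piece with fewer than $A$ distinct elements: here no repartition is possible or needed, and the query correctly reports no $\alpha$-minority when appropriate, since a piece overlapped by $[l..r]$ still contains all distinct elements of $S[l..r]$. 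Collecting the space bound, the query bound, and the amortized update bound yields Theorem~\ref{thm:minority}. \qed
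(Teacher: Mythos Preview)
Your proposal is correct and follows essentially the same approach as the paper: Section~5 is itself the proof, assembling the compressed dynamic sequence of Lemma~\ref{lem:string}, the two dynamic bitvectors $P$ and $C$, the piece-based query algorithm justified by the preceding lemma, and the potential-function amortization with $\phi = n - A(m-1)$, exactly as you summarize. Your additional remarks (that $A$ is independent of $n$ so no global rebuild is needed, and the invariant-preservation checks on repartitioning) are correct observations that the paper leaves implicit.
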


By using the idea in static form, we also obtain the first solution for
$\alpha$-minority queries using high-order entropy space. Here the bitvectors
take constant time to answer $\accop$, $\rankop$ and $\selop$ queries, and the
static sequence representation \cite[Thm. 11]{BN15} yields time
$O(\lg\lg_w\sigma)$ per operation. The construction is easily done in linear
time.

\begin{theorem}
On a RAM machine of $w = \Omega(\lg n)$ bits, a sequence of length n over an 
alphabet of size $\sigma$ can be represented using $nH_k + 2n + o(n\lg\sigma)$ 
bits for any $k = o(\lg_\sigma n)$, to answer range $\alpha$-minority queries 
in $O((1/\alpha) \lg\lg_w\sigma)$ time. The structure is built in $O(n)$ time.
\end{theorem}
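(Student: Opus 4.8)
The plan is to keep the whole combinatorial skeleton behind Theorem~\ref{thm:minority} and merely swap each dynamic ingredient for its best static counterpart, then re-run the cost analysis. Concretely, I would represent $S$ with the static sequence structure of \cite[Thm.~11]{BN15}, which uses $nH_k+o(n\lg\sigma)$ bits, answers $\accop$ in $O(1)$ time and $\rankop$ (hence $\rankop_c(S,\cdot)$) in $O(\lg\lg_w\sigma)$ time. The two auxiliary bitvectors $P$ and $C$ — defined exactly as in the dynamic case, $P[i]=1$ at the start of each piece and $C[j]=1$ at one chosen occurrence per distinct element of each piece — would be equipped with a classical static $\rankop/\selop$ index, costing $2n+o(n)$ bits and supporting $\accop$, $\rankop$, $\selop$ in $O(1)$ time. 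Summing the three components gives $nH_k+2n+o(n\lg\sigma)$ bits, as claimed. The partition of $S$ into pieces with $A$ to $3A$ distinct elements ($A=1+\lfloor 1/\alpha\rfloor$) is the one already used for the dynamic structure, so no new invariant is needed, and the piece-partition lemma stated earlier in this section applies verbatim.

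Next I would re-analyze the query algorithm, which is exactly the one described just before the ``Handling Updates'' subsection: use $\rankop_1/\selop_1$ on $P$ to find the first and last pieces overlapped by $[l..r]$; if at most two pieces are overlapped, enumerate their at most $6A$ distinct elements by walking the $1$'s of $C$ via $\rankop_1/\selop_1$ and reading each candidate as $S[p]=\accop(S,p)$; otherwise pick the leftmost contained piece and enumerate its $A$ to $3A$ distinct elements the same way; for each candidate $c$ compute $\rankop_c(S,r)-\rankop_c(S,l-1)$ and compare it with $\alpha(r-l+1)$. Every operation on $P$ and $C$ and every $\accop$ on $S$ is now $O(1)$, while each of the $O(1/\alpha)$ frequency checks costs $O(\lg\lg_w\sigma)$, so the total query time is $O((1/\alpha)\lg\lg_w\sigma)$. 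Correctness is inherited unchanged, since the piece-partition lemma concerns only the partition, not the implementation of the structures.

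For the construction bound I would build everything in one left-to-right scan of $S$. Growing the current piece greedily, I maintain, for each symbol, the index of its most recent occurrence (using a hash table, or a lazily initialized array, to stay within $O(n)$ total time regardless of $\sigma$); position $i$ contributes a new distinct element to the current piece iff that recorded index lies before the current piece start, in which case I set $C[i]\leftarrow1$ and increment the distinct count, closing the piece and setting the next $P$-bit once the count reaches $A$, with the usual merge of the last two pieces when the final one has fewer than $A$ distinct elements. This yields $P$ and $C$ in $O(n)$ time; the static $\rankop/\selop$ indices over them are built in $O(n)$ time; and the static sequence representation of \cite[Thm.~11]{BN15} is likewise built in linear time. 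Hence the structure is constructed in $O(n)$ time.

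I do not expect a real obstacle here; the statement is essentially a specialization of the already-proved dynamic result. The two points needing a moment of care are (i) making the piece partition run in genuinely linear time even when $\sigma$ may exceed $n$, which the lazy-array or hashing trick settles, and (ii) confirming that, once every bitvector operation and every $\accop$ on $S$ is $O(1)$, the query cost is dominated precisely by the $O(1/\alpha)$ $\rankop$ queries on $S$, each of which now costs $O(\lg\lg_w\sigma)$ rather than $O(\lg n/\lg\lg n)$; everything else is a direct transcription of the dynamic arguments.
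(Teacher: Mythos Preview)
Your proposal is correct and follows the same approach as the paper: replace the dynamic sequence by the static one of \cite[Thm.~11]{BN15} and the dynamic bitvectors by static $\rankop/\selop$ bitvectors, then reuse the piece partition and query algorithm verbatim, observing that the $O(1/\alpha)$ $\rankop$ calls on $S$ now dominate at $O(\lg\lg_w\sigma)$ each. The paper's own justification is a one-sentence sketch, so your write-up is in fact more detailed (in particular on the linear-time construction of $P$ and $C$) than what appears there.
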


\section{Conclusions}
\label{sec:conclusions}

In this article, we have designed the first compressed data structure for dynamic range $\alpha$-majority.
To achieve this result, our key strategy is to perform a multi-level
decomposition of the sequence $S$ and, for each block of $S$, precompute a
candidate set that includes all the $\alpha$-majorities of any query range of the right size that touches this block.
Thus, when answering a query, we need not find a set of blocks whose union
forms the query range, as is required in the solution of Elmasry~\etal~\cite{ehmn2016}. Instead, we only look for a single block that touches the query range.
  This simpler strategy allows us to achieve compressed space.
  
  Furthermore, we generalize our solution to design the first dynamic data
structure that can  maintain $S$ in the same space and update time, to support
the computation of the $\beta$-majorities in a given query range for any
$\beta \in [\alpha, 1)$ in $O(\frac{\lg n}{\beta\lg\lg n})$ time. Note that
here $\beta$ is given with the queries, and only $\alpha$ is fixed and given beforehand. This type of query is more general than range $\alpha$-majority queries and was only studied in the static case before~\cite{ghmn2011,bgmnn2016}.
  
Finally, we design the first dynamic data structure for the range $\alpha$-minority query problem, and this data structure is also compressed. Even simple
static solutions \cite{cdsw2015} based on range minimum queries are difficult
to dynamize. We find a new, simple data structure that is easy to maintain upon
updates and gives sufficient information to find $\alpha$-minorities in time
$O(\frac{\lg n}{\alpha\lg\lg n})$.

For constant $\alpha$, our query time $O(\frac{\lg n}{\lg\lg n})$ for $\alpha$-majority is optimal within polylogarithmic update time, as it matches the lower bound of the simpler operation {\em majority} \cite[Prop.\ 3]{HR03}, which considers the particular case of binary alphabets, ranges of the form $S[1..i]$, and $\alpha=1/2$. Another obvious lower bound is $O(1/\alpha)$, as it is the output size in the worst case. It is not clear whether a dynamic structure can achieve $O(\frac{\lg n}{\lg\lg n} + \frac{1}{\alpha})$ query time with polylogarithmic update time, even without compression. In the case of $\alpha$-minorities, there is no clear lower bound.

Interestingly, our dynamic data structures for $\alpha$-majority and 
$\alpha$-minority use $nH_k+o(n\log\sigma)$ bits of space ($+2n$ bits in the
case of minorities), which is less than the space achieved so far in the static
case. We thus describe static variants of our structures, which answer queries
in time $O((1/\alpha)\log\log_w \sigma)$. The best static solutions answer 
queries in time $O(1/\alpha)$ (which is optimal at least for
$\alpha$-majorities), but use more than $nH_0 \ge nH_k$
bits of space \cite{bgmnn2016}. An interesting question is whether the optimal 
times can be achieved within $k$-th order entropy space. 

\bibliographystyle{splncs03}
\bibliography{compressedmajority}

\begin{thebibliography}{10}
\providecommand{\url}[1]{\texttt{#1}}
\providecommand{\urlprefix}{URL }

\bibitem{av2003}
Arge, L., Vitter, J.S.: Optimal external memory interval management. SIAM
  Journal on Computing  32(6),  1488--1508 (2003)

\bibitem{bjr07}
Beame, P., Jayram, T.S., Rudra, A.: Lower bounds for randomized read/write
  stream algorithms. In: Proc. 39th Annual {ACM} Symposium on Theory of
  Computing (STOC). pp. 689--698 (2007)

\bibitem{BGN13}
Belazzougui, D., Gagie, T., Navarro, G.: Better space bounds for parameterized
  range majority and minority. In: Proc. 12th Annual Workshop on Algorithms and
  Data Structures (WADS). pp. 121--132 (2013)

\bibitem{BN15}
Belazzougui, D., Navarro, G.: Optimal lower and upper bounds for representing
  sequences. ACM Transactions on Algorithms  11(4),  article 31 (2015)

\bibitem{bgmnn2016}
Belazzougui, D., Gagie, T., Munro, J.I., Navarro, G., Nekrich, Y.: Range
  majorities and minorities in arrays. CoRR  abs/1606.04495 (2016)

\bibitem{BCW90}
Bell, T.C., Cleary, J., Witten, I.H.: Text Compression. Prentice Hall (1990)

\bibitem{cdlmw14}
Chan, T.M., Durocher, S., Larsen, K.G., Morrison, J., Wilkinson, B.T.:
  Linear-space data structures for range mode query in arrays. Theory of
  Computing Systems  55(4),  719--741 (2014)

\bibitem{cdsw2015}
Chan, T.M., Durocher, S., Skala, M., Wilkinson, B.T.: Linear-space data
  structures for range minority query in arrays. Algorithmica  72(4),  901--913
  (2015)

\bibitem{dlm2002}
Demaine, E.D., L{\'{o}}pez{-}Ortiz, A., Munro, J.I.: Frequency estimation of
  internet packet streams with limited space. In: Proc. 10th Annual European
  Symposium on Algorithms (ESA). pp. 348--360 (2002)

\bibitem{dhmn2013}
Durocher, S., He, M., Munro, J.I., Nicholson, P.K., Skala, M.: Range majority
  in constant time and linear space. Information and Computation  222,
  169--179 (2013)

\bibitem{ehmn2016}
Elmasry, A., He, M., Munro, J.I., Nicholson, P.K.: Dynamic range majority data
  structures. Theoretical Computer science  647,  59--73 (2016)

\bibitem{fsmmu1998}
Fang, M., Shivakumar, N., Garcia{-}Molina, H., Motwani, R., Ullman, J.D.:
  Computing iceberg queries efficiently. In: Proc. 24rd International
  Conference on Very Large Data Bases (VLDB). pp. 299--310 (1998)

\bibitem{FH11}
Fischer, J., Heun, V.: Space-efficient preprocessing schemes for range minimum
  queries on static arrays. SIAM Journal on Computing  40(2),  465--492 (2011)

\bibitem{GHN17}
Gagie, T., He, M., Navarro, G.: Compressed dynamic range majority data
  structures. In: Proc. 27th Data Compression Conference (DCC). pp. 260--269
  (2017)

\bibitem{ghmn2011}
Gagie, T., He, M., Munro, J.I., Nicholson, P.K.: Finding frequent elements in
  compressed 2d arrays and strings. In: Proc. 18th International Symposium on
  String Processing and Information Retrieval (SPIRE). pp. 295--300 (2011)

\bibitem{gglt10}
Greve, M., J{\o}rgensen, A.G., Larsen, K.D., Truelsen, J.: Cell probe lower
  bounds and approximations for range mode. In: Proc. 37th International
  Colloquium on Automata, Languages and Programming (ICALP). pp. 605--616
  (2010)

\bibitem{HM10}
He, M., Munro, J.I.: Succinct representations of dynamic strings. In: Proc.
  17th International Symposium on String Processing and Information Retrieval
  (SPIRE). pp. 334--346. LNCS 6393 (2010)

\bibitem{HR03}
Husfeldt, T., Rauhe, T.: New lower bound techniques for dynamic partial sums
  and related problems. SIAM Journal on Computing  32(3),  736--753 (2003)

\bibitem{ksp03}
Karp, R.M., Shenker, S., Papadimitriou, C.H.: A simple algorithm for finding
  frequent elements in streams and bags. {ACM} Transactions on Database Systems
   28,  51--55 (2003)

\bibitem{kn2008}
Karpinski, M., Nekrich, Y.: Searching for frequent colors in rectangles. In:
  Proc. 20th Canadian Conference on Computational Geometry (CCCG). pp. 11--14
  (2008)

\bibitem{mg1982}
Misra, J., Gries, D.: Finding repeated elements. Science of Computer
  Programming  2(2),  143--152 (1982)

\bibitem{mn2015}
Munro, J.I., Nekrich, Y.: Compressed data structures for dynamic sequences. In:
  Proc. 23rd Annual European Symposium on Algorithms (ESA). pp. 891--902 (2015)

\bibitem{NS14}
Navarro, G., Sadakane, K.: Fully-functional static and dynamic succinct trees.
  ACM Transactions on Algorithms  10(3),  article 16 (2014)

\bibitem{rrr2001}
Raman, R., Raman, V., Rao, S.S.: Succinct dynamic data structures. In: Proc.
  7th International Workshop on Algorithms and Data Structures (WADS). pp.
  426--437 (2001)

\end{thebibliography}

%\Newpagex
%\appendix
%\begin{center}
%  \bf \Large Appendices
%\end{center}

\end{document}